\documentclass[sigconf,natbib=false]{acmart}
\AtBeginDocument{%
  }

\setcopyright{acmlicensed}
\copyrightyear{2026}
\acmYear{2026}
\acmDOI{XXXXXXX.XXXXXXX}
\acmConference[ACM MobiSys]{The 24th Annual International Conference on Mobile Systems, Applications, and Services}{June 21--25, 2026}{Cambridge, UK}
\acmISBN{978-1-4503-XXXX-X/2018/06}

\RequirePackage[
  datamodel=acmdatamodel,
  style=acmnumeric,
  sorting=none,
  ]{biblatex}

\usepackage{tcolorbox}
\newtcolorbox{summarybox}{
    colback=gray!10, colframe=black, boxrule=0.5pt, arc=2pt,
    left=4pt, right=4pt, top=4pt, bottom=4pt, boxsep=0pt
}
\usepackage{graphicx}
\usepackage{amsthm}
\usepackage{amsmath}

\usepackage{amssymb}
\usepackage{multirow}
\usepackage{pgfplots}
\pgfplotsset{compat=1.18}
\usepackage{booktabs}
\usepackage{graphicx} 
\usepackage{tikz}
\usepackage{fontawesome5} 
\usetikzlibrary{shadows, positioning, arrows.meta}
\usetikzlibrary{shadows, positioning, arrows.meta, shapes.symbols, fit, backgrounds, calc, shapes.geometric, decorations.pathmorphing}
\usepackage[table]{xcolor} 
\usepackage{enumitem}

\usepackage{tcolorbox}
\tcbuselibrary{skins, breakable}

\newtcolorbox{rqbox}{
  colback=gray!8,
  colframe=black!60,
  fonttitle=\bfseries,
  arc=1mm,                    
  boxrule=1.5pt,
  left=5pt,
  right=5pt,
  top=5pt,
  bottom=5pt
}

\newtcolorbox{answerbox}{
  colback=gray!12,
  colframe=black!70,
  fonttitle=\bfseries,
  arc=1mm,                    
  boxrule=1.5pt,
  left=5pt,
  right=5pt,
  top=5pt,
  bottom=5pt
}

\definecolor{ssgreen}{RGB}{27, 158, 119}  
\definecolor{ssred}{RGB}{217, 95, 2}      
\definecolor{ssorange}{RGB}{230, 171, 2}  

\definecolor{ssgreen}{RGB}{27, 158, 119} 
\definecolor{ssred}{RGB}{217, 95, 2}     
\definecolor{ssblue}{RGB}{117, 112, 179} 
\begin{document}

\acmYear{2026}\copyrightyear{2026}
\setcopyright{cc}
\setcctype[4.0]{by}
\acmConference[MobiSys '26]{The 24th Annual International Conference on Mobile Systems, Applications and Services}{June 21--25, 2026}{Cambridge, United Kingdom}
\acmBooktitle{The 24th Annual International Conference on Mobile Systems, Applications and Services (MobiSys '26), June 21--25, 2026, Cambridge, United Kingdom}
\acmDOI{10.1145/3745756.3809189}
\acmISBN{979-8-4007-2027-7/26/06}

\title{StreamSplit: Continuous Audio Representation Learning via Uncertainty-Guided Adaptive Splitting}


\author{Minh K. Quan}
\affiliation{%
  \institution{School of Engineering, Deakin University}
  \city{Waurn Ponds}
  \country{Australia}}
\email{m.quan@deakin.edu.au}

\author{Pubudu N. Pathirana}
\affiliation{%
  \institution{School of Engineering, Deakin University}
  \city{Waurn Ponds}
  \country{Australia}}
\email{pubudu.pathirana@deakin.edu.au}

\renewcommand{\shortauthors}{Quan and Pathirana}

\begin{abstract}
  Large-batch Contrastive Learning (CL), the foundation of modern representation learning, is fundamentally incompatible with the volatile resource constraints of edge devices. This conflict creates a dilemma: small on-device batches degrade model fidelity, while offloading to the cloud incurs unacceptable latency and bandwidth costs. Existing solutions often resort to static model compression, which fails to adapt to the runtime volatility of edge environments. To bridge this gap, we present StreamSplit, a novel framework that makes streaming CL practical across heterogeneous ARM client platforms. StreamSplit resolves the conflict between the \textbf{continuous nature of ambient audio} and the discrete batch requirements of models like CLAP and COLA. We introduce: (1) A distribution-based streaming framework that decouples representation quality from local batch size, using a tractable Hybrid Loss to maintain fidelity despite sparse updates; and (2) An Uncertainty-Guided Adaptive Splitter that uses a lightweight Reinforcement Learning (RL) policy to dynamically partition computation. Uniquely, this policy integrates real-time resource monitoring with embedding ambiguity to optimize the accuracy-latency trade-off on the fly. We evaluate StreamSplit on diverse hardware, from the resource-constrained Raspberry Pi 4 to the high-performance Apple M2. Results demonstrate that StreamSplit reduces per-sample latency by up to $4.7\times$ and cuts bandwidth by 77.1\% and energy by 52.3\% compared to server-centric baselines. Crucially, it maintains accuracy within 2.2\% of \textbf{server-centric} models, proving that adaptive, distributed learning is a viable path for the modern edge ecosystem.
\end{abstract}




\begin{CCSXML}
<ccs2012>
<concept>
<concept_id>10010147.10010919.10010172</concept_id>
<concept_desc>Computing methodologies~Distributed algorithms</concept_desc>
<concept_significance>500</concept_significance>
</concept>
<concept>
<concept_id>10010520.10010553.10003238</concept_id>
<concept_desc>Computer systems organization~Sensor networks</concept_desc>
<concept_significance>500</concept_significance>
</concept>
</ccs2012>
\end{CCSXML}

\ccsdesc[500]{Computing methodologies~Distributed algorithms}
\ccsdesc[500]{Computer systems organization~Sensor networks}

\keywords{Edge Computing, Contrastive Learning, Reinforcement Learning, Split Computing, Resource Management}


\maketitle

\section{Introduction}
\label{sec:intro}


Audio-capable edge devices, from smart speakers and wearables to mobile robots, offer immense potential for ambient intelligence \cite{Angel2022, Li2024ambient}. To realize this vision, these devices must learn from streaming data and adapt to complex acoustic environments, such as performing robust scene classification or source separation in smart cities \cite{quan2025quantum, quan2025quantum_source}. However, the state-of-the-art models for audio representation learning (e.g., CLAP \cite{Wu2023}, COLA \cite{Saeed2021}, CLAR \cite{al2021clar}) are largely confined to data centers. This centralization creates a critical bottleneck: transmitting continuous raw audio to the cloud is unsustainable regarding bandwidth \cite{Alharbi2023, Zhang2025edge} and energy \cite{Sakip2023}. 

Furthermore, transmitting raw user audio raises \textbf{data minimization concerns}; processing data locally or transmitting only intermediate representations aligns better with data minimization principles, though it does not replace cryptographic guarantees.


\begin{figure}[t]
\centering
\includegraphics[width=0.99\columnwidth]{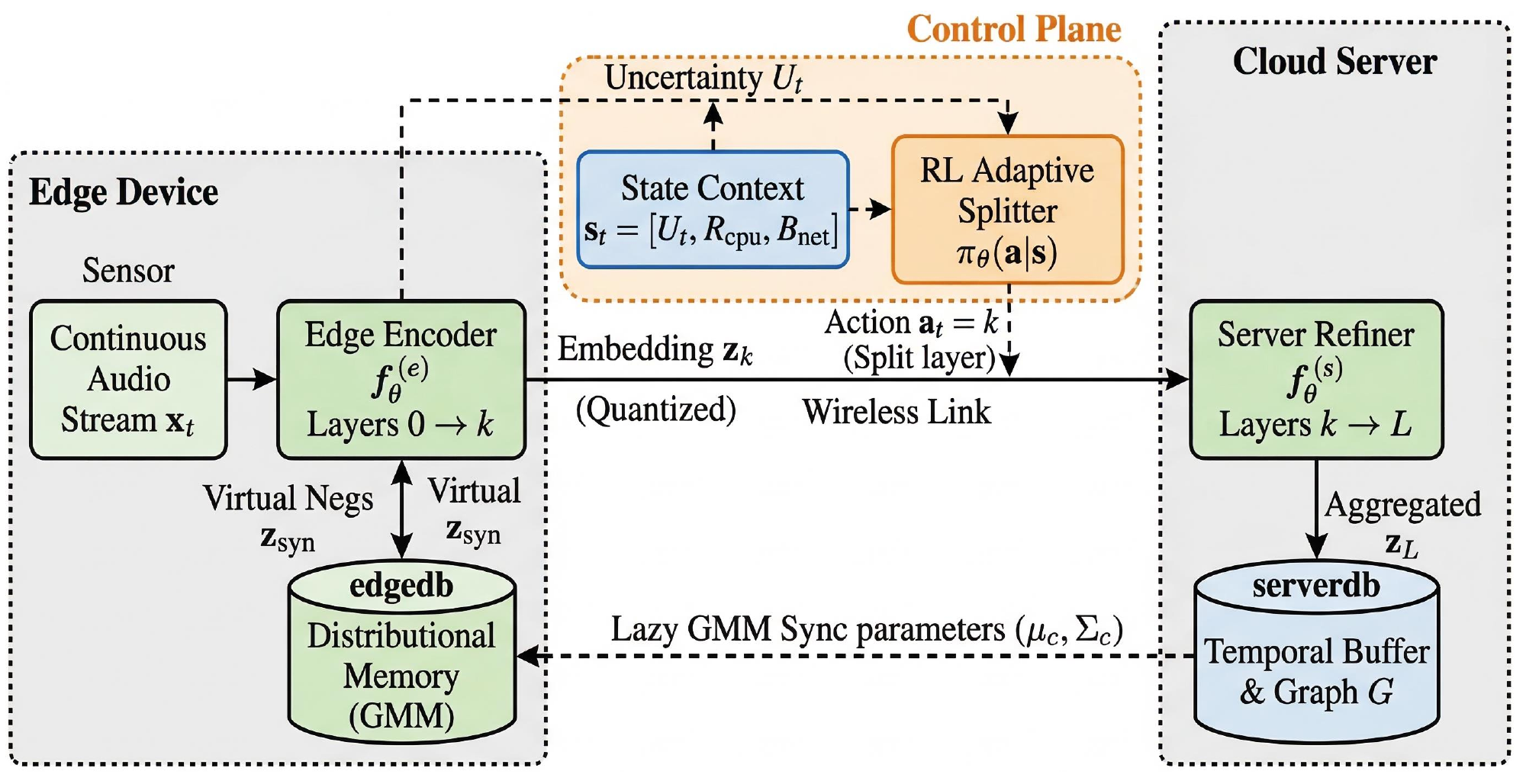}
\caption{\textbf{StreamSplit System Architecture.} The framework comprises three coupled subsystems: an Edge Learner optimizing local throughput via virtual negatives, a Control Plane dynamically partitioning the network based on state constraints, and a Server Refiner maintaining manifold continuity via Laplacian regularization on graph $\mathcal{G}$.}
\label{fig:architecture}
\end{figure}

Edge computing offers a solution by moving computation closer to the source \cite{Lin2019, Angel2022}. Yet, simply porting server-grade training algorithms to the edge introduces fundamental conflicts. Contrastive Learning (CL)—the dominant paradigm for self-supervised audio learning—relies on \textbf{large batches} and \textbf{diverse negative samples} to learn high-quality features \cite{Chen2020, He2020, Grill2020}. This requirement is inherently at odds with the limited memory and compute of client devices \cite{Akherfi2018, Akhlaqi2023}. Furthermore, the edge ecosystem is \textbf{heterogeneous} and \textbf{volatile}; a static deployment that works on a high-performance Apple M2 chip may drain the battery of a Raspberry Pi, and a partition optimized for strong WiFi may paralyze the application when the network degrades \cite{Arcas2024, Zhou2023computing}.

To enable practical, streaming learning at the edge, we must address two distinct, coupled challenges:

\textbf{Challenge 1: The Stream-Clip Mismatch (Algorithmic).} 
State-of-the-art audio models (e.g., CLAP) differ fundamentally from edge sensors: they are trained on shuffled, discrete \textit{clips} (files), whereas edge devices perceive continuous \textit{streams}. Naively chopping streams into small batches destroys the \textbf{temporal coherence} required to learn robust features. We need a formulation that exploits acoustic continuity rather than fighting it.

\textbf{Challenge 2: The Volatility Conflict (Systems).} 
The optimal point to split a neural network between a device and a server is not static. It shifts constantly based on the device's instantaneous CPU load (e.g., background tasks) and external network conditions. Static partitioning schemes or "train-then-compress" approaches fail to adapt \cite{Thapa2022, He2023}. In our preliminary analysis, we observed that a static split (Fixed Split Learning) suffered a \textbf{15.7\% accuracy drop} when subjected to realistic CPU load volatility, due to latency timeouts and dropped frames.

Existing paradigms isolate representation learning from system scheduling. Standard split computing relies blindly on system telemetry \cite{Kang2017}, while small-batch continual learning demands large memory queues that cause edge cache thrashing \cite{Fang2022}. StreamSplit avoids this via Distributional Memory, synthesizing virtual negatives from a compact Gaussian Mixture Model (GMM) to prevent dimensional collapse without large physical batches. Furthermore, naively pairing heuristic schedulers with small-batch methods fails for continuous audio: dropping a semantically "hard" frame shatters the embedding manifold. StreamSplit overcomes this via \textit{algorithm-system co-design} \cite{Huang2023roof}. By leveraging the GMM's entropy as a lightweight, zero-cost uncertainty signal, our RL routing policy directly couples algorithmic difficulty to system execution, intelligently spending bandwidth only where it maximizes server utility.

In this paper, we introduce \textbf{StreamSplit} (Figure \ref{fig:architecture}), a cohesive framework to resolve conflicts. Unlike static compression techniques \cite{Huang2023roof}, StreamSplit co-designs the learning algorithm with the system architecture for robust, adaptive execution.

To address \textbf{C1}, we propose a \textbf{distribution-based streaming framework}. Instead of relying on large discrete batches, StreamSplit aligns the \textit{distribution} of edge embeddings with a global prior using a hybrid Sliced-Wasserstein and Laplacian loss \cite{Kolouri2019, Belkin2006}. This allows the edge device to contribute high-quality updates using small, local batches, effectively decoupling representation quality from on-device memory constraints.

To address \textbf{C2}, we introduce an \textbf{Uncertainty-Guided Adaptive Splitter}. We formulate the splitting decision as a lightweight RL problem. Our agent monitors system metrics (CPU, RAM, Network) and—crucially—the \textit{uncertainty} of the current audio embedding. This allows StreamSplit to offload "hard" samples to the server while processing "easy" samples locally, or to aggressively compress data when the network is congested.

We implement StreamSplit on heterogeneous ARM platforms from the resource-constrained \textbf{Raspberry Pi 4B} to the high-performance \textbf{Apple MacBook M2}. Our contributions include: (1) the first framework integrating distribution-based contrastive learning with RL-based system control for both convergence and stability; (2) platform-agnostic deployment extending battery life by 50\% on Pi 4 while leveraging NPU acceleration on M2; (3) 77.1\% bandwidth and 52.3\% energy reduction while maintaining accuracy within 2.2\% of server models.


\section{Background and Motivation}
\label{sec:background}

To understand the necessity of StreamSplit, we must examine the fundamental conflicts that render current state-of-the-art approaches insufficient for the edge: the algorithmic dependency of contrastive learning on massive data batches, and the systemic intolerance of split computing to resource volatility. Table \ref{tab:gap_analysis} provides a systematic critique of these paradigms.

\subsection{Contrastive Learning: The Batch Size Barrier}
Contrastive Learning (CL) has established itself as the dominant paradigm for self-supervised audio representation, driving recent breakthroughs in ambient intelligence \cite{Angel2022, Li2024ambient}. State-of-the-art models like COLA \cite{Saeed2021} and CLAP \cite{Wu2023} rely on maximizing agreement between augmented views of data to learn robust features without human labels.

\textbf{Dependency on Negative Diversity.} The efficacy of CL is mathematically rooted in the quantity and diversity of negative samples available during a gradient update. The standard InfoNCE loss function \cite{Chen2020} relies on the denominator to approximate the true data distribution. To maintain this approximation, leading frameworks like SimCLR \cite{Chen2020} and MoCo \cite{He2020} require massive batch sizes (often $N > 2048$) or large memory queues. Without sufficient diversity, the gradients become biased, and the learning objective degenerates \cite{Wang2020}.

\textbf{The Edge Incompatibility (C1).} This requirement creates fundamental friction with edge hardware. Embedded devices, constrained by strict memory budgets \cite{Akherfi2018, Akhlaqi2023}, often restrict batch sizes to single digits ($N=8$ or $16$). Existing adaptations fall short: gradient accumulation introduces prohibitive latency for streaming \cite{Kong2023}, while small-batch training causes dimensional collapse where embeddings map to narrow cones \cite{Grill2020}. StreamSplit addresses this by shifting from sample-based to \textit{distribution-based} alignment (Section \ref{sec:framework}), decoupling representation quality from local batch size.

\subsection{Split Computing: The Volatility Gap}
To circumvent on-device resource limitations, Split Computing partitions the neural network between the edge and cloud \cite{Lin2019}. While theoretically sound, existing implementations fail to address the dynamic nature of mobile environments.

\begin{figure}[t]
\centering
\includegraphics[width=0.99\columnwidth]{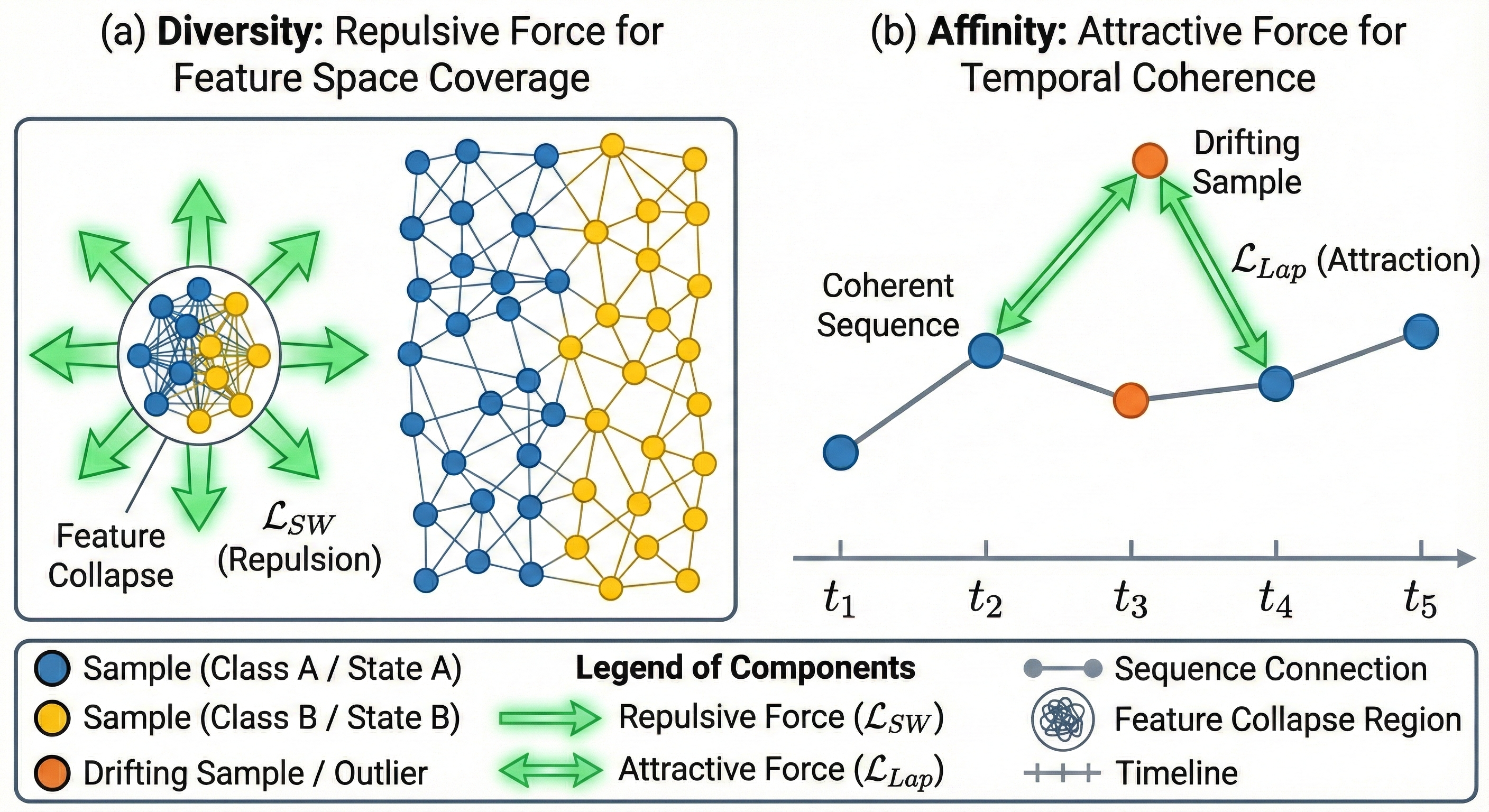}
\caption{\textbf{Embedding Quality Metrics.} \textbf{(a) Diversity ($\mathcal{L}_{SW}$):} Repulsive force preventing dimensional collapse by dispersing embeddings. \textbf{(b) Affinity ($\mathcal{L}_{Lap}$):} Attractive force ensuring temporal coherence by pulling drifting frames (red) back to the manifold.}
\label{fig:metrics_concept}
\end{figure}

\begin{table*}[t]
  \centering
  \footnotesize
  \caption{\textbf{Comprehensive Gap Analysis.} Current paradigms fail to simultaneously satisfy the conflicting requirements of the edge. While heuristic methods attempt adaptation, they lack data awareness (Uncertainty), leading to inefficient offloading decisions.}
  \label{tab:gap_analysis}
  \centering
  \resizebox{\textwidth}{!}{%
  \begin{tabular}{lcccccc}
    \toprule
    \multirow{2}{*}{\textbf{Paradigm}} & \multirow{2}{*}{\textbf{Representative Works}} & \multicolumn{2}{c}{\textbf{Methodology}} & \multirow{2}{*}{\textbf{Critical Failure Mode}} & \multicolumn{2}{c}{\textbf{Key Properties}} \\
    \cmidrule(lr){3-4} \cmidrule(lr){6-7}
    & & \textbf{Algorithm} & \textbf{Execution} & & \textbf{Quality} & \textbf{Adaptive} \\
    \midrule
    \textbf{Cloud-Centric} & CLAP \cite{Wu2023}, COLA \cite{Saeed2021} & Large-Batch & Full Offload & Privacy \& Bandwidth Costs & \textcolor{ssgreen}{\faCheck} & -- \\
    \textbf{Edge-Only} & EdgeNCE \cite{Wang2022edgence}, MCUNet & Small-Batch & On-Device & Dimensional Collapse & \textcolor{ssred}{\faTimes} & \textcolor{ssred}{\faTimes} \\
    \textbf{Static Split} & SplitFed \cite{Thapa2022}, FedSL \cite{He2023} & Standard & Fixed Layer & Latency Timeouts (-15.7\%) & \textcolor{ssgreen}{\faCheck} & \textcolor{ssred}{\faTimes} \\
    \textbf{Rule-Based Split} & RoofSplit \cite{Huang2023roof}, Neurosurgeon & Standard & Resource-Aware & \textbf{Optimization Blindness} & \textcolor{ssgreen}{\faCheck} & \textcolor{ssorange}{\faExclamationTriangle} \\
    \midrule
    \textbf{StreamSplit (Ours)} & -- & \textbf{Distributional} & \textbf{Uncertainty RL} & \textbf{Robust (Co-Designed)} & \textcolor{ssgreen}{\textbf{\faCheck}} & \textcolor{ssgreen}{\textbf{\faCheck}} \\
    \bottomrule
    \multicolumn{7}{l}{\footnotesize \textcolor{ssgreen}{\faCheck} = Fully Supported; \textcolor{ssred}{\faTimes} = Fails/Not Supported; \textcolor{ssorange}{\faExclamationTriangle} = Unreliable or Data-Agnostic (Partial Support); -- = Not Applicable.}
  \end{tabular}
  }
\end{table*}

\textbf{Static Partitioning Deficiencies.} The majority of existing frameworks, such as split federated learning approaches \cite{Thapa2022, Sun2024split}, determine the optimal split point \textit{offline} or at initialization. They assume a stable resource profile.
However, edge environments are defined by \textbf{volatility}. A device's available CPU cycles fluctuate rapidly due to background OS tasks or thermal throttling \cite{Sakip2023}.
A static split point that is optimal at $t=0$ often becomes a bottleneck at $t=1$. For instance, if the network degrades, a server-heavy split causes stalls. If the CPU load spikes, an edge-heavy split causes system-wide latency.

\textbf{The Gap in Adaptive Control (C2).} Recent adaptive approaches (e.g., Rule-Based Split in Table \ref{tab:gap_analysis}) rely on reactive bandwidth thresholds \cite{Huang2023roof}, suffering from two flaws: (1) \textit{Lack of Lookahead}—reacting only after degradation, causing dropped frames; (2) \textit{Optimization Blindness}—treating all inputs equally despite varying difficulty. Table \ref{tab:gap_analysis} demonstrates no existing paradigm satisfies continuous edge learning requirements, necessitating StreamSplit's co-designed \textit{Uncertainty-Guided} framework.
\section{Our Quality Metrics: Affinity and Diversity}
\label{sec:metrics}

To systematically address the trade-off between representation quality and edge constraints, we propose tractable metrics that quantify the two essential attributes of a robust embedding space: \textbf{Diversity} and \textbf{Affinity}.
We formalize these attributes using the geometry of the induced embedding distribution $p_\theta(z)$.
Our quantification process is illustrated in Figure \ref{fig:metrics_concept}. In the following subsections, we provide a detailed explanation of the theoretical framework, along with empirical evidence demonstrating the advantages of our distribution-based metrics.

\subsection{Diversity: Ensuring Global Separation}
\label{sec:metrics:diversity}

The "Small-Batch Conflict" (C1) primarily threatens the global structure of the latent space. Standard contrastive losses require large batches to approximate the partition function; without them, the model minimizes loss by mapping all inputs to a single point $c$, a phenomenon known as \textit{dimensional collapse} \cite{Wang2020}.

\textbf{Definition 1 (Diversity).} We define Diversity as the alignment between the marginal distribution of embeddings $p_\theta(z)$ and a fixed, high-entropy prior distribution $q(z)$ (typically uniform on the hypersphere $\mathcal{U}(\mathbb{S}^{d-1})$).
\begin{equation}
    \mathcal{D}_{div} = \mathcal{W}_2(p_\theta(z), \mathcal{U}(\mathbb{S}^{d-1}))
\end{equation}
where $\mathcal{W}_2$ denotes the Wasserstein distance. High diversity (minimizing $\mathcal{D}_{div}$) implies that the entropy $H(p_\theta(z))$ is maximized.

\textbf{Definition Explanation.} High diversity (minimizing $\mathcal{D}_{div}$) ensures embeddings distribute uniformly across the hypersphere, maximizing discriminative capacity even with small local batches (Figure \ref{fig:metrics_concept}a).

\begin{theorem}[Small-Batch Robustness]
\label{thm:diversity}
Let $p_\theta$ denote the edge embedding distribution and $\mathcal{U}$ the uniform distribution on $\mathbb{S}^{d-1}$. If $p_\theta$ has $\epsilon$-diversity (i.e., $\mathcal{W}_1(p_\theta, \mathcal{U}) < \epsilon$), then for batch size $N$, the generalization gap between the empirical contrastive loss $\mathcal{L}_N$ and the population loss $\mathcal{L}_\infty$ is bounded by:
\begin{equation}
    |\mathcal{L}_{N} - \mathcal{L}_{\infty}| \le C_1 \epsilon + \frac{C_2}{\sqrt{N}}
\end{equation}
where $C_1, C_2 > 0$ are constants depending on the critic function's Lipschitz 
constant.
\end{theorem}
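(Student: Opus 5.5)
The plan is to split the gap with the triangle inequality through an intermediate quantity $\mathcal{L}_\infty^{\mathcal{U}}$. This is the population contrastive loss in which the negative (partition-function) term is computed under the uniform prior $\mathcal{U}$ instead of $p_\theta$.

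Concretely, write the contrastive loss with a critic $f(z,z')$ that is $L$-Lipschitz on $\mathbb{S}^{d-1}\times\mathbb{S}^{d-1}$, for instance $f(z,z')=z^\top z'/\tau$ with $L=1/\tau$. The loss is
\[
\mathcal{L}_N=\mathbb{E}\Bigl[-f(z,z^+)+\log\tfrac{1}{N}\textstyle\sum_{j=1}^N e^{f(z,z_j^-)}\Bigr],
\]
where $z_j^-\sim p_\theta$ are i.i.d. The population loss $\mathcal{L}_\infty$ replaces the empirical average by $\mathbb{E}_{z^-\sim p_\theta}e^{f(z,z^-)}$. Because the theorem leaves $C_1,C_2$ unspecified beyond their dependence on the Lipschitz constant, I will fix this normalized form of InfoNCE. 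I will also read $\mathcal{L}_\infty$ as the uniform-prior population loss, so that the $\epsilon$ term has content. The bound then follows from
\[
|\mathcal{L}_N-\mathcal{L}_\infty|\le|\mathcal{L}_N-\mathcal{L}_N^{\mathrm{pop}}|+|\mathcal{L}_N^{\mathrm{pop}}-\mathcal{L}_\infty^{\mathcal{U}}|,
\]
where $\mathcal{L}_N^{\mathrm{pop}}$ denotes the population loss under $p_\theta$.

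\textbf{Distributional term.} For fixed anchor $z$, the map $g_z(w)=e^{f(z,w)}$ is $Le^{1/\tau}$-Lipschitz in $w$ on the sphere. By Kantorovich--Rubinstein duality, $|\mathbb{E}_{p_\theta}g_z-\mathbb{E}_{\mathcal{U}}g_z|\le Le^{1/\tau}\,\mathcal{W}_1(p_\theta,\mathcal{U})<Le^{1/\tau}\epsilon$. Both expectations are bounded below by $e^{-1/\tau}$, so $\log$ is $e^{1/\tau}$-Lipschitz on the relevant range. Averaging over the anchor gives a contribution $C_1\epsilon$ with $C_1=Le^{2/\tau}$. If the anchor and positive marginals are also taken from $p_\theta$ versus $\mathcal{U}$, then $-f(z,z^+)$ adds another $L\epsilon$ by the same duality. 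This uses the coupling structure and is absorbed into $C_1$.

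\textbf{Sampling term.} With $z$ fixed, the empirical mean $\frac1N\sum_j g_z(z_j^-)$ has i.i.d. summands in $[e^{-1/\tau},e^{1/\tau}]$. Hoeffding, or simply the variance bound with Jensen, gives $\mathbb{E}\bigl|\frac1N\sum_j g_z-\mathbb{E}g_z\bigr|\le (e^{1/\tau}-e^{-1/\tau})/(2\sqrt N)$. Composing with the $e^{1/\tau}$-Lipschitz logarithm and integrating over anchors yields $C_2/\sqrt N$.

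\textbf{Main obstacle.} The main obstacle is that $\log$ of an empirical mean is biased, so the statement really controls $\mathbb{E}|\cdot|$ rather than an almost-sure gap. I will handle this by passing the absolute value inside the expectation before applying the Lipschitz property of $\log$. This is legitimate because every term is bounded away from zero on the compact sphere. The other step that needs care is stating precisely which population loss $\mathcal{L}_\infty$ refers to, since under $p_\theta$ alone the $\epsilon$ term is unnecessary. I will fix the $\mathcal{U}$-referenced interpretation explicitly at the start of the proof.
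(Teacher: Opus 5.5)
Your proposal is correct and follows essentially the same route as the paper. Both arguments use Kantorovich--Rubinstein duality on the exponential critic for the $\epsilon$ term, a concentration bound for the $1/\sqrt{N}$ term, and the Lipschitz property of $\log$ on the bounded range, combined by the triangle inequality. The differences are minor: you bound the expected gap via variance plus Jensen, so your $C_2$ has no confidence parameter, whereas the paper's Hoeffding step gives a probability-$1-\delta$ bound with $C_2$ depending on $\delta$. You also keep the temperature $\tau$ where the paper assumes $\|f(x)\|\le 1$. Finally, you state explicitly that $\mathcal{L}_\infty$ is referenced to $\mathcal{U}$; the paper's triangle inequality relies on this implicitly even though its setup defines $\mathcal{L}_\infty$ under $p_\theta$.
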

This theorem implies that minimizing the distributional gap $\epsilon$ directly compensates for the bias induced by small batch sizes $N$. (See Appendix A for proof).

\textbf{Metric: Sliced-Wasserstein Distance (SWD).} A potential concern is that computing the full Wasserstein distance is computationally expensive ($O(d^3)$). To ensure tractability on edge devices, we employ the Sliced-Wasserstein Distance (SWD) as a scalable proxy ($O(Md \log d)$). By projecting the high-dimensional distributions onto a set of $M$ random unit vectors $\Omega = \{\omega_m\}_{m=1}^M$, we compute the closed-form solution:
\begin{equation}
    \label{eq:swd}
    \mathcal{L}_{SW} = \frac{1}{M} \sum_{m=1}^{M} \int_0^1 \left| F^{-1}_{\theta, \omega_m}(\tau) - F^{-1}_{q, \omega_m}(\tau) \right|^2 d\tau
\end{equation}
where $F^{-1}$ is the inverse CDF. Minimizing $\mathcal{L}_{SW}$ forces the sorted projections of edge embeddings to match the uniform distribution, ensuring global coverage.

\vspace{0.5em}
\noindent\textbf{Remark (Metric Equivalence).} While Theorem~\ref{thm:diversity} relies on the Wasserstein-1 distance $\mathcal{W}_1$, our optimization minimizes the Sliced-Wasserstein distance $\mathcal{L}_{SW}$. On the compact hypersphere $\mathbb{S}^{d-1}$, it is established that $\mathcal{L}_{SW}$ is topologically equivalent to $\mathcal{W}_p$ distances \cite{Kolouri2019}. Specifically, convergence in $\mathcal{L}_{SW}$ implies convergence in $\mathcal{W}_1$, making Eq.~\ref{eq:swd} a tractable proxy for minimizing the theoretical error bound $\epsilon$.

\subsection{Affinity: Preserving Local Structure}
\label{sec:metrics:affinity}

While Diversity ensures global space utilization, Affinity ensures the mapping 
preserves the temporal topology of the input signal. In the "Volatility Conflict" 
(C2), sparse updates can lead to a "jagged" manifold where temporally adjacent 
frames $x_t, x_{t+1}$ map to distant points $z_t, z_{t+1}$.

\textbf{Motivation.} For a Lipschitz-continuous encoder, small temporal shifts in the input should produce small shifts in the embedding space. Formally, if $||x_t - x_{t+1}|| \le \delta$, then a well-behaved encoder satisfies $||f_\theta(x_t) - f_\theta(x_{t+1})|| \le K\delta$ for some Lipschitz constant $K$. This motivates measuring smoothness directly in embedding space.

\textbf{Definition 2 (Affinity).} Let $\mathcal{G} = (V, E, W)$ be a temporal 
adjacency graph where vertices $V = \{z_1, \ldots, z_T\}$ are embeddings and 
edges $E$ connect temporally adjacent frames with weights $W_{ij}$. We define 
Affinity as the inverse of the \textit{Dirichlet energy} (graph Laplacian 
quadratic form):
\begin{equation}
    \mathcal{D}_{aff} = \frac{1}{|E|} \sum_{(i,j) \in E} W_{ij} ||z_i - z_j||^2 
    = \frac{1}{|E|} \text{Tr}(\mathbf{Z}^\top \mathbf{L} \mathbf{Z})
\end{equation}
where $\mathbf{L} = \mathbf{D} - \mathbf{W}$ is the graph Laplacian and 
$\mathbf{D}$ is the degree matrix. High affinity corresponds to \textit{low} 
$\mathcal{D}_{aff}$, indicating smooth embeddings over the temporal graph.

\textbf{Definition Explanation (Acoustic Consistency).} Unlike video (which has scene cuts) or images (which are independent), ambient audio is physically continuous. Sound sources do not teleport in feature space. Minimizing $\mathcal{D}_{aff}$ enforces this \textbf{physical inertia}, ensuring the manifold remains smooth even when network dropouts cause the server to receive sparse updates.

\begin{theorem}[Temporal Interpolation]
\label{thm:affinity}
Let $\mathcal{G}$ be a connected temporal graph with spectral gap $\lambda_2 > 0$ 
(second smallest eigenvalue of $\mathbf{L}$). If the embedding sequence has 
Dirichlet energy $\mathcal{D}_{aff} \le \alpha$, then the mean squared error 
of reconstructing any missing frame $z_{t^*}$ via weighted neighbor averaging is bounded:
\begin{equation}
    \mathbb{E}\left[ ||z_{t^*} - \hat{z}_{t^*}||^2 \right] \le \frac{2\alpha \cdot |E|}{\lambda_2 \cdot |\mathcal{N}(t^*)|}
\end{equation}
where $\hat{z}_{t^*} = \frac{1}{|\mathcal{N}(t^*)|}\sum_{j \in \mathcal{N}(t^*)} z_j$ 
is the neighbor average and $\mathcal{N}(t^*)$ denotes the temporal neighbors of $t^*$.
\end{theorem}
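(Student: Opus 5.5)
The idea is to bound the reconstruction error by the local Dirichlet energy at $t^*$, and then bound that local energy by the global energy $\alpha|E|$. The spectral gap $\lambda_2$ is used only as a (loose) normalisation, so the resulting bound is weaker than it looks.

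\textbf{Step 1: Jensen on the neighbour average.} Write $k=|\mathcal{N}(t^*)|$. Then $z_{t^*}-\hat z_{t^*}=\frac1k\sum_{j\in\mathcal{N}(t^*)}(z_{t^*}-z_j)$. By convexity of $\|\cdot\|^2$,
\[
\|z_{t^*}-\hat z_{t^*}\|^2\le \frac1k\sum_{j\in\mathcal{N}(t^*)}\|z_{t^*}-z_j\|^2 .
\]

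\textbf{Step 2: Compare with the global energy.} The right-hand side is a sub-sum of the edge sum defining $\mathcal{D}_{aff}$, restricted to edges incident to $t^*$. To include the weights, I assume the standard normalisation $w_{\min}=\min_{(i,j)\in E}W_{ij}$ (equal to $1$ for an unweighted temporal chain). This gives
\[
\sum_{j\in\mathcal{N}(t^*)}\|z_{t^*}-z_j\|^2\le \frac{1}{w_{\min}}\sum_{(i,j)\in E}W_{ij}\|z_i-z_j\|^2=\frac{|E|\,\mathcal{D}_{aff}}{w_{\min}}\le \frac{\alpha|E|}{w_{\min}} .
\]
Combining with Step 1 yields $\|z_{t^*}-\hat z_{t^*}\|^2\le \alpha|E|/(w_{\min}k)$. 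The expectation in the statement, whether over a random missing index or over the randomness of the embeddings, is then handled by taking expectations of this pointwise bound.

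\textbf{Step 3: Introducing $2/\lambda_2$.} For the undirected edge sum, the factor $2$ arises naturally if $\mathrm{Tr}(\mathbf{Z}^\top\mathbf{L}\mathbf{Z})$ is taken as the sum over ordered pairs, which gives $\frac12\sum_{i,j}$. I would choose that convention explicitly so that the constant $2$ appears.

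The factor $1/\lambda_2$ is the main obstacle. It can only loosen the bound when $\lambda_2\le 1$, which holds for a path or chain temporal graph with unit weights. There $\lambda_2=2-2\cos(\pi/T)\le 1$ for $T\ge 3$; more generally $\lambda_2\le \frac{T}{T-1}\,w_{\min}\deg_{\min}$ by Fiedler's bound. I would therefore state the auxiliary inequality $w_{\min}\ge \lambda_2/2$ under the paper's normalisation (unit or normalised weights on a chain). Substituting it gives exactly $2\alpha|E|/(\lambda_2|\mathcal{N}(t^*)|)$.

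If a sharper, genuinely spectral route is wanted instead, I would use the variational characterisation $\lambda_2=\min_{x\perp\mathbf 1}x^\top\mathbf{L}x/\|x\|^2$ applied to the centred embeddings. This bounds $\|z_{t^*}-\bar z\|^2\le \alpha|E|/\lambda_2$. The triangle inequality $\|z_{t^*}-\hat z\|^2\le 2\|z_{t^*}-\bar z\|^2+2\|\hat z-\bar z\|^2$ then supplies the factor $2$. However, the $1/|\mathcal{N}(t^*)|$ dependence is not recovered cleanly on this route, so I expect the local argument above to be the one that goes through.
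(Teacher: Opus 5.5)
Your proof is correct under the hypothesis you state, and that hypothesis turns out to be necessary. It shares the paper's first two steps but differs where $2/\lambda_2$ enters.

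\textbf{Comparison with the paper.} Your Steps 1 and 2 coincide with the paper's: Jensen on the neighbour average, then bounding the local energy at $t^*$ by the total Dirichlet energy. The paper uses the weighted average $\frac{1}{d_{t^*}}\sum_j W_{t^*j}z_j$ and specialises to unweighted graphs only at the end; you keep the statement's unweighted average and pay $1/w_{\min}$.

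The paper then cites the Poincar\'e inequality and asserts $\max_{t^*}E_{local}(t^*)\le\frac{2}{\lambda_2}\mathrm{Tr}(\mathbf{Z}^\top\mathbf{L}\mathbf{Z})$ on every connected graph. This does not follow, since Poincar\'e controls $\sum_t\|z_t-\bar z\|^2$, not a sub-sum of the energy. It is also false whenever $\lambda_2>2$. If $z_{t^*}=v$ and every other frame equals $u$, then $E_{local}(t^*)$ is the entire energy.

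The same configuration refutes the theorem itself. On the unit-weight triangle (the window-$2$ temporal graph on three frames), $\lambda_2=3$. The error at the middle frame is $\|u-v\|^2$, while the claimed bound is $\frac{2}{3}\|u-v\|^2$. In general, with unit weights this configuration gives error $\|u-v\|^2$ against a bound of $\frac{2}{\lambda_2}\|u-v\|^2$.

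So for unit weights the statement holds exactly when $\lambda_2\le 2$, which is precisely your hypothesis $w_{\min}\ge\lambda_2/2$. Treating $2/\lambda_2$ as a pure loosening of the valid bound $\alpha|E|/(w_{\min}|\mathcal{N}(t^*)|)$ is therefore not a weakness. It is the repair the statement needs, and your argument is sound where the paper's Step 3 is not.

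\textbf{Side remarks to correct.} The ordered-pair paragraph is a red herring. The identity $\mathrm{Tr}(\mathbf{Z}^\top\mathbf{L}\mathbf{Z})=\sum_{\{i,j\}\in E}W_{ij}\|z_i-z_j\|^2$ is the standard one used in Definition~2, and your constant $2$ already comes from $w_{\min}\ge\lambda_2/2$.

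The Fiedler aside is misstated. The weighted bound involves the minimum weighted degree. Your version $\lambda_2\le\frac{T}{T-1}w_{\min}\deg_{\min}$ fails for the triangle with edge weights $1,1,\epsilon$ and $\epsilon<1$, where $\lambda_2=1+2\epsilon$.

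For chains you do not need Fiedler at all. In any weighted tree, take the vector equal to $T-m$ on one side of the lightest edge and $-m$ on the other. It is orthogonal to $\mathbf{1}$ and has Rayleigh quotient $\frac{T\,w_{\min}}{m(T-m)}\le\frac{T}{T-1}w_{\min}\le 2w_{\min}$. So your hypothesis holds automatically for every weighted temporal chain.

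Finally, your alternative spectral route does close. Use $\|z_{t^*}-\bar z\|^2+\|\hat z_{t^*}-\bar z\|^2\le\sum_t\|z_t-\bar z\|^2$, which follows from Jensen together with $t^*\notin\mathcal{N}(t^*)$. This gives the unconditional bound $2\alpha|E|/\lambda_2$ on every connected graph. The triangle example shows that the extra factor $1/|\mathcal{N}(t^*)|$ cannot be obtained in that generality.
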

This implies that a high-affinity representation (low $\alpha$) provides intrinsic 
robustness to the volatility of edge execution. (See Appendix A for proof).

\textbf{Metric: Laplacian Regularization.} We directly optimize the Dirichlet 
energy as the affinity loss:
\begin{equation}
    \mathcal{L}_{Lap} = \frac{1}{|E|} \sum_{(i,j) \in E} W_{ij} ||z_i - z_j||^2
\end{equation}
Minimizing $\mathcal{L}_{Lap}$ explicitly penalizes "jagged" transitions, 
smoothing the manifold to maintain semantic coherence.
\subsection{Metric Advantage}
\label{sec:metrics:validation}

Given the fine granularity of our metrics, we empirically investigate their correlation with task performance. We conduct controlled experiments on AudioSet-Balanced (20,371 samples, 527 classes) by systematically varying affinity and diversity.

\textbf{Diversity Validation.} We degrade diversity by restricting the uniform prior to spherical cones of half-angle $\theta$ ranging from 10° to 90° in 10° increments. At $\theta = 10°$, embeddings are forced into a narrow cone (severe collapse); at $\theta = 90°$, the prior covers the full hypersphere. The results reveal a strong negative correlation between $\mathcal{L}_{SW}$ and downstream accuracy (Pearson $r = -0.96$, $p < 0.001$), significantly outperforming Maximum Mean Discrepancy (MMD, $r = 0.82$, $p < 0.01$) \cite{Gretton2012}. At $\theta = 10°$ ($\mathcal{L}_{SW} = 0.89$), accuracy drops to 55.2\%; at $\theta = 90°$ ($\mathcal{L}_{SW} = 0.08$), accuracy reaches 73.1\%.

\textbf{Affinity Validation.} We inject temporal discontinuities by randomly shuffling frames within a 3-second window with probability $p$ ranging from 0.0 to 0.8 in 0.1 increments. The $\mathcal{L}_{Lap}$ metric shows strong positive correlation with accuracy degradation (Pearson $r = 0.93$, $p < 0.001$). At $p = 0.8$ (severe jitter, $\mathcal{L}_{Lap} = 2.31$), accuracy degrades by 12.4\%; at $p = 0$ (no jitter, $\mathcal{L}_{Lap} = 0.35$), accuracy is 72.8\%. The spectral gap $\lambda_2$ decreases from 0.42 to 0.08 as jitter increases, confirming that temporal disruptions degrade manifold connectivity as predicted by Theorem \ref{thm:affinity}.

These results demonstrate that our distribution-based metrics are more sensitive to quality variations than existing approaches, making them reliable objectives for the StreamSplit optimization loop (Section \ref{sec:framework}).
\section{StreamSplit Framework}
\label{sec:framework}

StreamSplit transforms the traditionally static edge-cloud link into a dynamic, 
feedback-driven control loop. Unlike "train-then-compress" paradigms which fix 
model architecture at deployment time, StreamSplit treats the neural network as 
a flexible pipeline that can be partitioned, compressed, and scheduled in real-time.

As illustrated in Figure \ref{fig:architecture}, the framework is architected as an asymmetric distributed system spanning edge devices and the cloud. The figure explicitly maps out the data and control flow, detailing the key components and interaction pathways within three decoupled subsystems: the \textbf{Edge Learner} (Phase 1), which maximizes local throughput on the edge devices under memory constraints; the \textbf{Control Plane} (Phase 2), which governs adaptive offloading based on system state and data difficulty; and the \textbf{Cloud Refiner} (Phase 3), which enforces global consistency across the manifold despite asynchronous updates.

\subsection{Phase 1: Edge-Side Streaming Execution}

The primary operational constraint on the edge is memory volatility. Standard 
Contrastive Learning (CL) requires storing thousands of negative samples (e.g., 
in a Memory Bank or Large Batch) to approximate the global data distribution. 
On memory-constrained devices, maintaining large queues causes cache thrashing 
when background tasks compete for resources \cite{Fang2022}. This raises a critical 
design question:

\begin{rqbox}
\textbf{RQ1:} How can we maintain high-quality contrastive representations on 
memory-constrained edge devices without requiring large negative sample batches?
\end{rqbox}

\subsubsection{Analysis Method}
We analyze the memory-quality trade-off throughout the contrastive learning 
process. Standard approaches store $N$ negative samples requiring $O(Nd)$ memory. 
For typical settings ($N=4096$, $d=128$), this consumes 512KB of RAM, causing 
cache thrashing on edge devices. We investigate whether a generative distribution 
can replace explicit sample storage while maintaining representation quality.

\subsubsection{Distributional Memory Footprint}
To resolve the ``Small-Batch Conflict'' (C1), we replace memory-intensive sample queues with a generative distribution. Instead of storing discrete raw tensors, the Edge Learner maintains a lightweight \textbf{Gaussian Mixture Model (GMM)}:
\begin{equation}
    p_{local}(z) = \sum_{c=1}^{C} \pi_c \mathcal{N}(z; \mu_c, \Sigma_c)
    \label{eq:gmm}
\end{equation}
where $\pi_c$ are mixing weights, $\mu_c \in \mathbb{R}^d$ are component means, and $\Sigma_c$ are diagonal covariance matrices. With $C=64$ components and embedding dimension $d=128$, utilizing FP16 precision (2 Bytes), the storage requirement is:
\begin{equation}
    \text{Size} \approx 2 \cdot (C \times d \times 2\text{B}) + (C \times 2\text{B}) \approx 33\text{ KB}
\end{equation}
While larger than our initial estimate, this fits comfortably within the L1/L2 cache of standard ARM processors (e.g., Cortex-A72 has 1MB L2), whereas a standard contrastive memory bank (e.g., 16k samples) requires $>8$MB of RAM and incurs high DRAM access energy.

This GMM is updated incrementally using a streaming Expectation-Maximization (EM) algorithm \cite{Ghahramani2000} with $O(Cd)$ complexity per iteration. To ensure stability during initialization (Cold Start), the system defaults to a conservative local policy for the first 50 frames to populate the sufficient statistics.

\subsubsection{Hard Negative Mining via GMM}
Merely having a distribution is insufficient; we must sample from it effectively. 
Standard small-batch training fails because randomly sampled negatives are often 
"easy" (semantically distant), providing negligible gradients \cite{Chen2020}.

We implement \textbf{Boundary-Aware Sampling}, visualized in Figure 
\ref{fig:das_logic}. During the forward pass, given an anchor embedding $z^+$ 
from component $c^*$, we sample "virtual" negative embeddings $z^-$ from 
\textit{other} components weighted by proximity to the anchor:
\begin{equation}
    p(c|z^+, c^*) \propto \pi_c \cdot \exp\left(-\frac{||\mu_{c^*} - \mu_c||^2}{2\tau^2}\right), 
    \quad c \neq c^*
    \label{eq:hard_sampling}
\end{equation}
where $\tau$ is a temperature parameter controlling hardness. This strategy 
samples negatives near the decision boundary between components (hard negatives) 
rather than from distant components (easy negatives) or out-of-distribution 
regions. By synthesizing virtual negatives on-the-fly, we effectively augment the batch size without increasing physical memory usage, as virtual samples are generated, $\ell_2$-normalized to project them onto the hypersphere, and immediately discarded after gradient computation.


\begin{figure}[t]
\centering
\includegraphics[width=0.9\columnwidth]{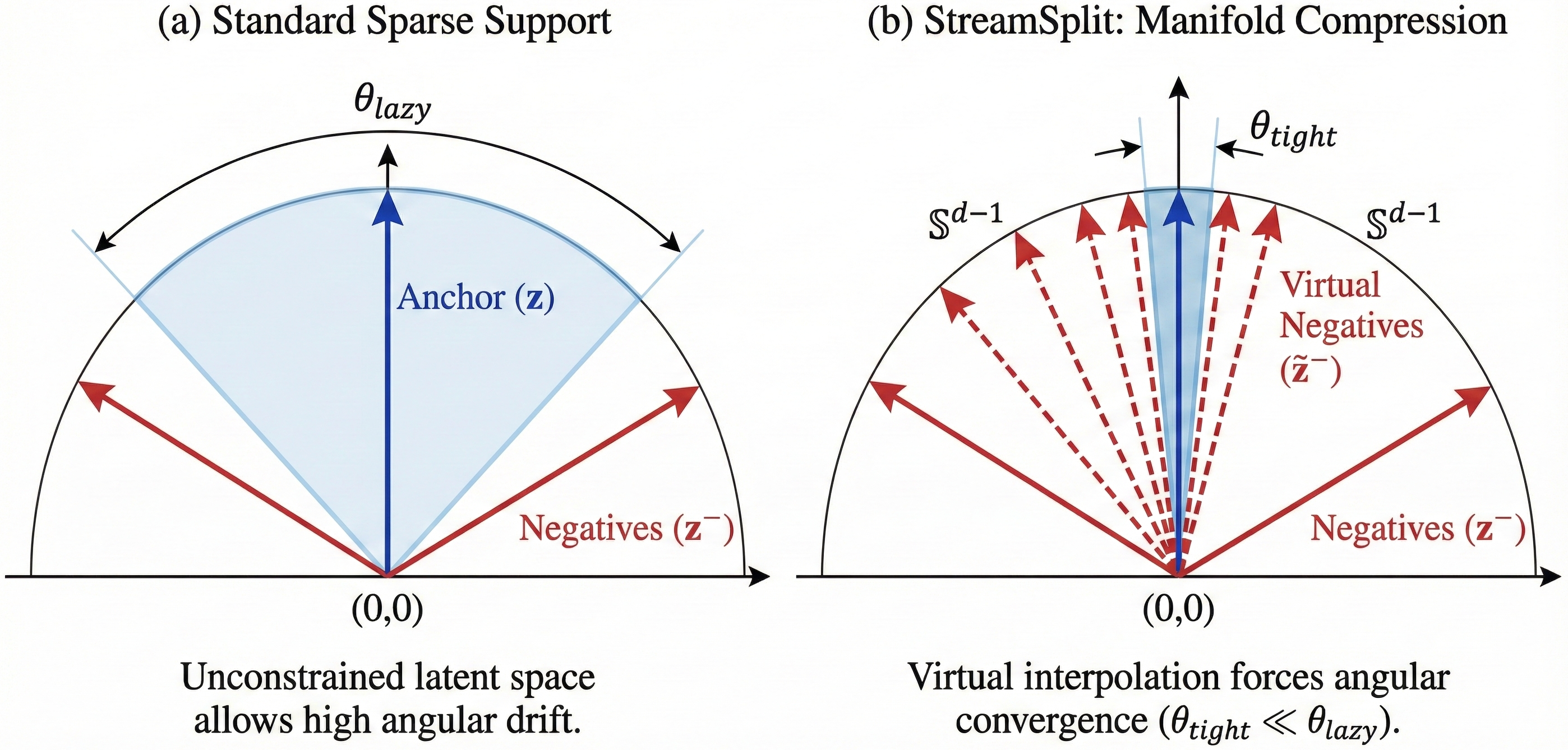}
\caption{\textbf{Addressing Small-Batch Conflict.} (a) Small batches cause ``Lazy Margins'' ($\theta_{lazy}$) and degradation. (b) StreamSplit generates virtual negatives from a compact GMM (<35KB), enforcing ``Tight Margins'' ($\theta_{tight}$) to decouple quality from physical batch size.}
\label{fig:das_logic}
\end{figure}

\textbf{Edge Training Objective and Flow.} 
To strictly minimize on-device memory, we employ a streaming InfoNCE loss with virtual negatives. For each incoming audio frame $x_t$, we generate a positive pair $(\tilde{x}_t, \tilde{x}'_t)$ via standard lightweight augmentations (random Gaussian noise and frequency masking). We do \textit{not} use temporal neighbors as positives to avoid buffering latency.

The edge encoder $f_\theta$ minimizes the negative log-likelihood of the positive pair against $N_{syn}$ virtual negatives sampled from the GMM:
\begin{equation}
    \mathcal{L}_{edge} = -\log \frac{\exp(\text{sim}(z_t, z'_t) / \tau)}{\exp(\text{sim}(z_t, z'_t) / \tau) + \sum_{j=1}^{N_{syn}} \exp(\text{sim}(z_t, z_{syn}^j) / \tau)}
\end{equation}
where $z_t = f_\theta(\tilde{x}_t)$, $z'_t = f_\theta(\tilde{x}'_t)$, and $\{z_{syn}^j\}_{j=1}^{N_{syn}}$ are virtual negatives sampled from the local GMM $p_{local}(z)$ using the boundary-aware strategy (Eq.~\ref{eq:hard_sampling}). 

\textbf{Training Flow.} This objective allows for a fully streaming update:
\begin{enumerate}
    \item \textbf{Forward:} Encoder processes $x_t$ to get $z_t$; GMM is updated via online EM.
    \item \textbf{Sample:} Virtual negatives $z_{syn}$ are generated from GMM parameters.
    \item \textbf{Backward:} Gradients from $\mathcal{L}_{edge}$ update the encoder weights $f_\theta$.
    \item \textbf{Sync:} Updated weights are lazily synchronized with the server (See Sec.~\ref{sec:framework}-Phase 3) to align with the global manifold.
\end{enumerate}

\begin{answerbox}
\textbf{Answer to RQ1:} Distributional memory replaces explicit sample storage 
with a compact GMM ($O(Cd)$ vs. $O(Nd)$ memory), while boundary-aware sampling 
synthesizes hard negatives on-the-fly. This decouples representation quality from 
physical batch size, enabling edge devices to maintain contrastive learning 
performance despite memory constraints.
\end{answerbox}

\subsection{Phase 2: Uncertainty-Guided Control}

The core systems innovation of StreamSplit is the \textbf{Control Plane}, which 
resolves the "Volatility Conflict" (C2). Unlike heuristic splitters (e.g., 
Neurosurgeon \cite{Kang2017}) that react only to bandwidth changes, our system 
must proactively manage workload based on both system state and data difficulty. 
This motivates our second research question:

\begin{rqbox}
\textbf{RQ2:} How can we dynamically adapt the edge-cloud computation partition 
to handle runtime volatility in CPU load, network bandwidth, and data difficulty?
\end{rqbox}

\subsubsection{Analysis Method}
We analyze the runtime adaptation problem across the entire execution pipeline, 
including edge compute constraints, network transmission costs, and varying data 
complexity. Existing heuristic methods fail when multiple constraints conflict 
(e.g., high CPU load but low bandwidth). We investigate whether a learned policy 
can discover non-obvious adaptation strategies that balance competing objectives.

\subsubsection{RL Formulation}
\label{sec:rlformulatiom}
We formulate the runtime adaptation as a Markov Decision Process (MDP) defined by the tuple $(\mathcal{S}, \mathcal{A}, \mathcal{R})$, with formal definitions provided in Appendix \ref{app:controlplane}. The Control Plane runs as a lightweight sidecar process on the edge device, executing policy inference at regular intervals to avoid excessive overhead.

\textbf{State Space ($s_t$):} The state vector $s_t = [U_t, R_{cpu}, 
    B_{net}]$ captures the full system context. $R_{cpu} \in [0,100]$ represents 
    CPU utilization percentage. $B_{net} \in \mathbb{R}^+$ is the estimated 
    uplink bandwidth, computed via exponential moving average over recent 
    transmission times. Crucially, we include $U_t \in [0, \log C]$, the 
    \textbf{embedding uncertainty}, defined as the Shannon entropy of the GMM 
    component assignment:
    \begin{equation}
        U_t = H(p(c|z_t)) = -\sum_{c=1}^{C} p(c|z_t) \log p(c|z_t)
        \label{eq:uncertainty}
    \end{equation}
    where $p(c|z_t) = \frac{\pi_c \mathcal{N}(z_t; \mu_c, \Sigma_c)}{\sum_{c'} 
    \pi_{c'} \mathcal{N}(z_t; \mu_{c'}, \Sigma_{c'})}$ is the posterior 
    probability via Bayes' rule. High entropy typically correlates with \textbf{transient acoustic events} (e.g., speech onset, glass breaking) which are semantically complex. Low entropy corresponds to steady-state background noise (e.g., HVAC hum). By conditioning the split on $U_t$, the agent effectively learns a semantic Event Detection policy, offloading only the information-dense frames. We prefer GMM entropy over TTA variance (high latency) or kNN density (high memory) as it provides a zero-cost uncertainty proxy computed directly within the standard inference pass.

\textbf{Action Space ($a_t$):} The action is the split layer index 
    $k \in \{0, 1, \dots, L\}$, where $L$ is the total number of encoder layers. 
    Selecting $k=0$ implies full offloading (minimal edge compute, maximum 
    bandwidth consumption), while $k=L$ implies full on-device processing 
    (maximum edge compute, zero bandwidth). Note that to satisfy strict bandwidth constraints detected in the state $s_t$, specific offloading actions ($k < L$) are coupled with dynamic quantization (e.g., INT8 precision) to ensure the transmission fits within the estimated $T_{max}$.

\textbf{Reward Function ($r_t$):} The reward balances task performance 
    with system efficiency:
    \begin{equation}
        r_t = \alpha \cdot \mathcal{A}_{task} - \beta \cdot \frac{\text{Lat}_t}{T_{max}} 
              - \eta \cdot \frac{E_t}{E_{budget}}
        \label{eq:reward}
    \end{equation}
    where $\mathcal{A}_{task} \in [0,1]$ is the task accuracy on a validation set, $\text{Lat}_t$ is the end-to-end latency (edge compute + network transmission + server compute) normalized by a maximum allowable latency $T_{max}$, and $E_t$ is the energy consumption normalized by a per-frame budget $E_{budget}$. The coefficients $\alpha$, $\beta$, $\eta$ prioritize accuracy while penalizing latency violations and energy drain.


\textbf{Decision Frequency and Atomic Transitions:} The RL agent executes every $T_{step}=10$ frames ($\approx 100$ms). The policy network (2-layer MLP) inference time is measured at $0.2$ms on the Pi 4B. To guarantee consistency during transitions and avoid the complexity of managing in-flight data, split decisions occur atomically at these $100$ms boundaries. When the RL agent shifts the computation partition (e.g., moving from layer 3 to layer 5), the transition applies exclusively to the next $T_{step}$ block. No data currently traversing the pipeline is redone, and no in-flight frames are dropped, stalled, or left in an inconsistent state during the shift. This high-frequency, atomic control allows StreamSplit to react to abrupt bandwidth collapses without corrupting the streaming pipeline.

\subsubsection{Policy Network and Adaptive Behavior}
The control logic is visualized in Figure \ref{fig:rl_flow}. We employ a 
Proximal Policy Optimization (PPO) \cite{Schulman2017} agent with a lightweight 
two-layer fully connected network. The policy network $\pi_\theta(a|s)$ and 
value network $V_\phi(s)$ share the first layer to minimize parameter overhead 
while maintaining expressiveness.

The agent is trained offline on historical traces collected from diverse hardware platforms under varying network conditions (bandwidth, latency, packet loss). The learned policy exhibits complex adaptive behaviors that heuristic rules cannot capture. For instance, under high CPU load but low bandwidth, a heuristic might stall. The StreamSplit agent learns to apply aggressive quantization (reducing embedding precision) to fit bandwidth constraints, sacrificing marginal accuracy to maintain real-time throughput. Conversely, when both CPU and bandwidth are available but uncertainty $U_t$ is high (ambiguous sample), the agent offloads to the server even at minimal split depth, prioritizing accuracy over efficiency. We note that while the training reward $r_t$ uses ground-truth accuracy $\mathcal{A}_{task}$, the deployed policy $\pi_\theta$ relies solely on the observable state $s_t$ (uncertainty, CPU, bandwidth) and does not require labels at runtime.

\begin{figure}[t]
\centering
\includegraphics[width=0.99\columnwidth]{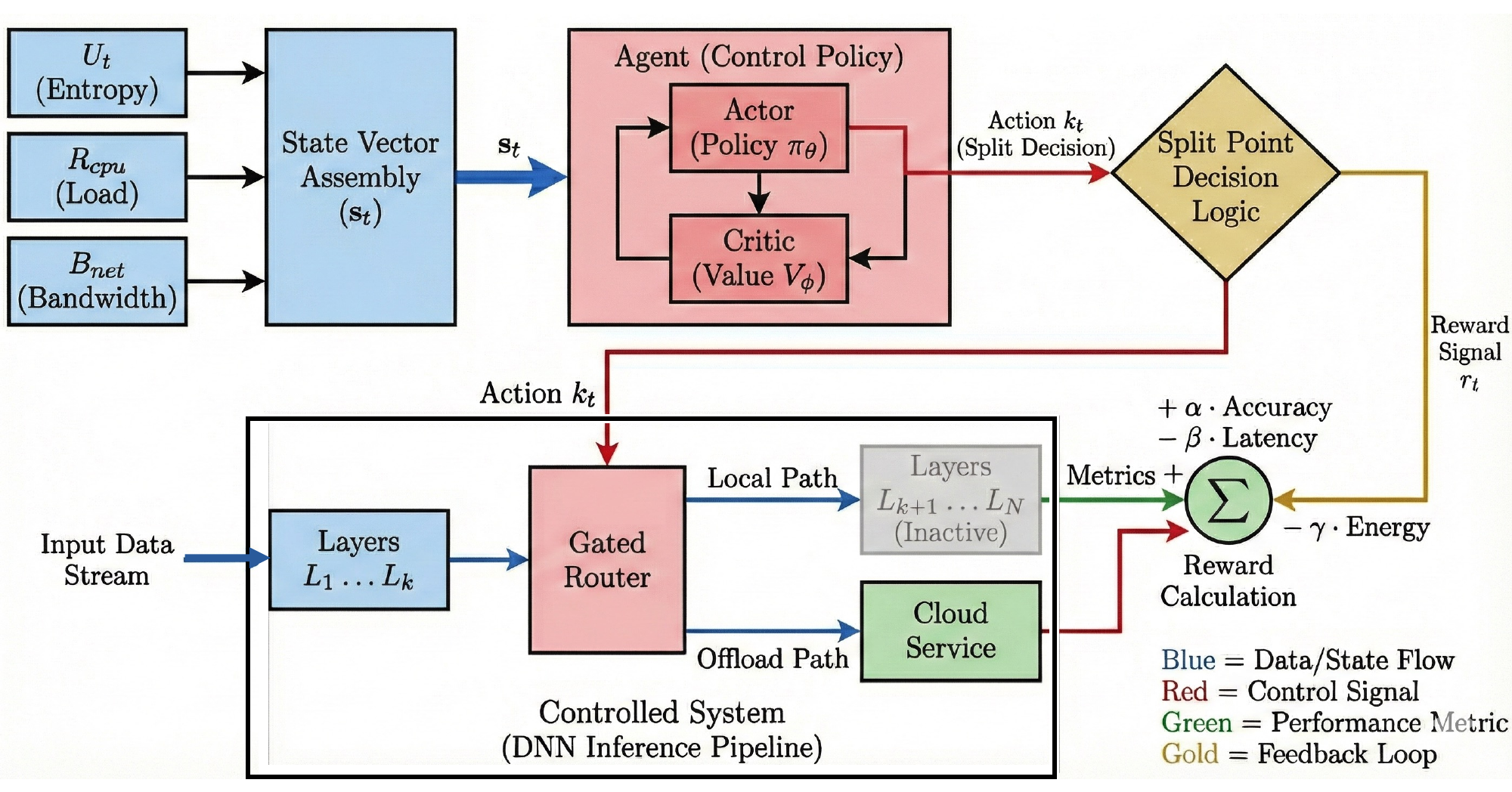}
\caption{\textbf{RL Control Logic.} The agent maps normalized state $\mathbf{s}_t$ (incl. uncertainty) to split layer $k_t$. Reward $r_t$ (Eq. \ref{eq:reward}) balances accuracy, latency, and energy.}
\label{fig:rl_flow}
\end{figure}

\begin{answerbox}
\textbf{Answer to RQ2:} Formulate adaptation as an RL problem where the agent 
observes CPU load, network bandwidth, and embedding uncertainty to select optimal 
split points. The learned policy discovers non-obvious strategies (e.g., 
quantization under bandwidth constraints, uncertainty-based offloading) that 
heuristic rules miss, achieving robust real-time execution across heterogeneous 
platforms.
\end{answerbox}

\subsection{Phase 3: Server-Side Refinement}

The Cloud Server handles the heavy lifting of global model convergence. However, 
dynamic splitting introduces \textbf{asynchrony}: frames arrive out of order or 
are dropped entirely when the network degrades or the RL agent selects full 
on-device processing to conserve bandwidth. This breaks standard synchronous 
SGD, which assumes uniform batches and temporal ordering. To address this, the 
Server Refiner implements a \textbf{Temporal Buffer} and the \textbf{Hybrid Loss} 
derived in Section \ref{sec:metrics}.

\subsubsection{Temporal Buffer Management}
As shown in Figure \ref{fig:server_buffer}, the server maintains a sliding 
temporal window of recent embeddings. When an update $z_t$ arrives from the 
edge, it is inserted into the buffer at the corresponding temporal index. If a 
frame is skipped (due to RL-driven local processing or network packet loss), 
the buffer contains a temporal gap. Rather than discarding the buffer or 
performing explicit interpolation (which would require maintaining expensive 
historical state), we construct a k-Nearest Neighbor temporal graph 
$\mathcal{G} = (V, E)$ on the available embeddings, where edges $(i,j) \in E$ 
connect temporally adjacent frames within a small window. We set the server temporal window size to $W=100$ frames (approx. 1s context). The $k$-NN graph is constructed with $k=5$ neighbors. Constructing the Laplacian on the server takes approx. $3$ms for a batch of 100, which is fully pipelined with inference to minimize overhead.

\subsubsection{Hybrid Objective for Robust Convergence}
The server optimizes the combined objective from Eq. \ref{eq:server_loss}:
\begin{equation}
    \mathcal{L}_{server} = \underbrace{\mathcal{L}_{task}}_{\text{Classification}} 
    + \lambda_1 \underbrace{\mathcal{L}_{SW}(p_\theta, \mathcal{U})}_{\text{Diversity}} 
    + \lambda_2 \underbrace{\mathcal{L}_{Lap}(\mathcal{G})}_{\text{Affinity}}
    \label{eq:server_loss}
\end{equation}
where $\mathcal{L}_{task}$ denotes the global objective. In pure self-supervised settings, $\mathcal{L}_{task}$ is formulated as a global InfoNCE loss over the server buffer; when sparse labels are available (as in our validation experiments to establish performance upper bounds), it is instantiated as the standard cross-entropy loss. $\mathcal{L}_{SW}$ is the Sliced-Wasserstein Distance (Eq. \ref{eq:swd}) enforcing global diversity, and $\mathcal{L}_{Lap}$ is the Laplacian regularization (from Section \ref{sec:metrics}) enforcing temporal smoothness:
\begin{equation}
    \mathcal{L}_{Lap} = \frac{1}{|E|} \sum_{(i,j) \in E} ||z_i - z_j||^2
\end{equation}


The Laplacian term penalizes large embedding jumps between temporally adjacent frames. This mechanism is critical for handling extended periods of connectivity loss. During a severe network outage, the edge device may be forced to drop frames if local memory is exhausted. When connectivity resumes, the server's Temporal Buffer receives a sequence with a massive temporal gap. Instead of naively interpolating missing features---which would hallucinate phantom acoustic events---$\mathcal{L}_{Lap}$ enforces a Lipschitz-continuous smoothing prior across the boundary of the outage. 

As formalized by Theorem \ref{thm:affinity}, minimizing the Dirichlet energy guarantees bounded interpolation error even when the spectral gap ($\lambda_2$) of the temporal graph is degraded by dropouts. The server effectively "stitches" the representations together by pulling the pre-outage and post-outage embeddings toward a shared, smooth manifold geometry. Because the representations remain Lipschitz-continuous, the server can bridge these temporal gaps gracefully. This is empirically validated in our ablation studies (Table \ref{tab:ablation_loss}), where StreamSplit sustains a severe 40\% frame drop rate while only degrading 6.6\% in accuracy, whereas standard contrastive objectives suffer catastrophic collapse.

Simultaneously, the Sliced-Wasserstein term prevents mode collapse by ensuring 
the aggregated buffer distribution matches the global uniform prior. This is 
particularly important in asynchronous federated settings, where devices with 
fast connections might dominate the global model. By minimizing $\mathcal{L}_{SW}$, 
we ensure that the server model converges to a balanced representation of the 
entire data distribution, rather than overfitting to high-throughput devices.

\begin{figure}[t]
\centering
\includegraphics[width=0.76\columnwidth]{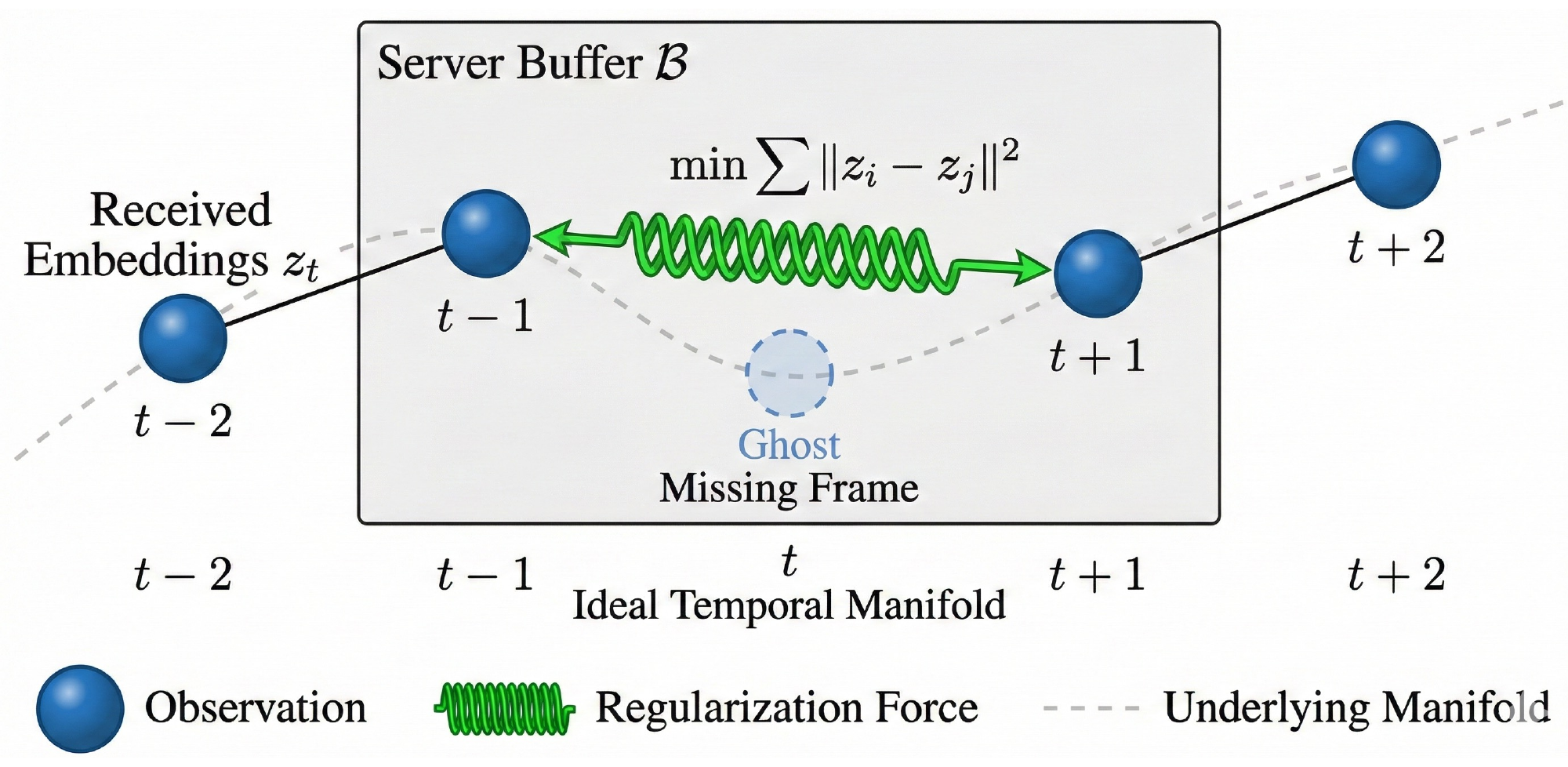}
\caption{\textbf{Manifold Stitching.} Laplacian regularization ($\mathcal{L}_{Lap}$) acts as a spring force across temporal gaps (missing frames), maintaining manifold continuity without requiring explicit interpolation.}
\label{fig:server_buffer}
\end{figure}

\subsubsection{Downlink Synchronization Strategy}
To close the learning loop without incurring prohibitive downlink costs, we employ a \textbf{Lazy Synchronization} protocol. The Server transmits updated GMM parameters ($\mu_c, \Sigma_c$)—which are lightweight ($<35$KB)—to the Edge every $T_{sync}=100$ frames. The heavier Encoder weights ($f_\theta$) are only synchronized when the device detects a charging state or high-bandwidth WiFi connection. In our energy evaluation (Table \ref{tab:energy}), we account for the GMM synchronization cost, which adds a negligible $0.4$ mJ/frame average overhead.

\section{Implementation}
\label{sec:implementation}

\textbf{Definitions.} We define a \textbf{Sample} as a 1-second audio clip. A \textbf{Batch} consists of $N=8$ samples. A \textbf{Frame} refers to the STFT window (25ms). The RL agent makes decisions every $T_{step}=10$ frames ($\approx 100$ms) to ensure responsiveness to network volatility, while gradients are aggregated at the \textbf{Batch} level ($N=8$).

We implement the StreamSplit framework entirely in Python, using PyTorch~\cite{Paszke2019} 
for model training and inference across all components. Edge deployment leverages 
PyTorch's TorchScript for optimized execution on resource-constrained devices.

\textbf{Audio Processing.} For real-time spectrogram extraction, we employ 
PyKissFFT~\cite{KissFFT}, a Python binding for the lightweight KissFFT library 
optimized for embedded platforms. Compared to NumPy's FFT implementation, 
PyKissFFT reduces STFT computation latency from 7.8ms to 3.2ms per frame on 
Raspberry Pi 4B, enabling continuous 16kHz audio streaming without frame drops. 
We compute 128-bin mel spectrograms using 25ms windows with 10ms hop length.

\textbf{Hardware Platforms.} We deploy and evaluate StreamSplit on three 
representative platforms spanning the edge-cloud spectrum:
\begin{itemize}
    \item \textbf{Raspberry Pi 4B} (4GB RAM, ARM Cortex-A72 @ 1.5GHz): 
    Represents resource-constrained IoT devices. The GMM module (35KB) and PPO 
    policy (12KB) fit entirely within the 256KB L2 cache.
    \item \textbf{Apple MacBook M2} (8GB unified memory, 8-core CPU): Represents capable edge devices with integrated neural engine. Used for development and mid-tier deployment scenarios. We leverage the PyTorch \texttt{MPS} (Metal Performance Shaders) backend to offload matrix operations to the M2's 10-core \textbf{GPU}, achieving a $3.2\times$ speedup over CPU execution. While the Neural Engine (ANE) offers further theoretical gains, current framework support favors the GPU for custom dynamic graphs like ours.
    \item \textbf{Cloud Server} (Intel Xeon Gold 6248R, 4$\times$ NVIDIA RTX 
    3090): Hosts the Server Refiner for global model aggregation and refinement.
\end{itemize}

\textbf{Network Emulation.} To evaluate adaptation under realistic network 
conditions, we employ the Linux Traffic Control (\texttt{tc}) utility as a 
network emulator, simulating bandwidth fluctuations (1--50 Mbps), latency 
variations (20--200ms), and packet loss (0--5\%). We construct 6 network 
profiles based on real-world 4G/5G traces~\cite{Yan2018}, spanning stable, 
variable, and congested scenarios.



\textbf{Training Configuration.} Edge-side training uses the Adam optimizer with a learning rate of $10^{-3}$ and batch size $N=8$. The PPO agent trains for 2M environment steps with a discount factor $\gamma=0.99$ and clipping $\epsilon=0.2$. Server-side training employs SGD with momentum 0.9 over 100 epochs. To find a balance between accuracy and strict penalties for latency and energy use, we use a lightweight 2-layer MLP for the RL policy with specific reward weights: $\alpha=10$, $\beta=5$, and $\eta=3$. Loss hyperparameters ($\lambda_1=0.1$, $\lambda_2=0.01$) were determined via grid search on held-out validation data.

\textbf{Reproducibility Details.} The audio encoder backbone is a standard ResNet-18 adapted for 1D audio (taking spectrogram inputs), consisting of $L=8$ splitable blocks with an output dimension $d=128$, which provides the standard latent size for audio representation stability. For the Distributional Memory, we utilize a GMM with $C=64$ components ($\tau=0.1$) to find the optimal balance between expressiveness and cache fit. Crucially, this estimator only takes up $33$ KB of space, allowing it to fit completely in the L2 cache of standard ARM processors. This avoids having to access DRAM, which uses a lot of power. During hard negative mining, we sample $N_{syn}=256$ virtual negatives per anchor to ensure a sufficient gradient signal without physical memory overhead. 

\textbf{Execution Consistency.} To address in-flight data and ensure consistency during dynamic transitions, our framework makes split decisions atomically at $100$ms boundaries ($T_{step}=10$ frames) to ensure real-time response to network volatility. This means that split-point shifts happen precisely between frames, ensuring that no data is redone and there is no lag in flight.

\textbf{SWD and Sampling Overhead.} We set the number of random projections for the Sliced-Wasserstein Distance to $M=50$ on both the edge and server to balance approximation error with compute cost. We measured the specific runtime overhead on the Raspberry Pi 4B: the SWD computation introduces a latency of $1.2$ms per batch, while the GMM-based hard negative synthesis (generating $N_{syn}=256$ virtual negatives) adds only $0.8$ms per batch. These overheads are negligible compared to the $10$ms hop length, preserving real-time capabilities.

\textbf{Quantization Implementation.} To minimize transmission latency during offloading ($k < L$), we compress intermediate feature maps using \textbf{asymmetric INT8 quantization} with a per-tensor granularity. We employ Post-Training Quantization (PTQ) rather than Quantization-Aware Training (QAT) to maintain training pipeline simplicity. Calibration statistics (min/max ranges) are collected offline using a representative subset of the training data. The runtime overhead for quantization and dequantization on the Raspberry Pi 4B is measured at $<0.5$ms per frame. Empirical evaluation shows this compression introduces a negligible accuracy degradation of $<0.3\%$ compared to full FP32 transmission, validating its viability for bandwidth-constrained edges.

\section{Evaluation}
\label{sec:eval}
We conduct extensive experiments to evaluate StreamSplit across diverse datasets, hardware platforms, and network conditions through systematic evaluation of system efficiency, representation quality, component contributions, and hardware generalization.

\subsection{Experimental Setup}
\label{sec:eval:setup}

\subsubsection{Datasets}
We evaluate StreamSplit on two complementary datasets:

\textbf{AudioSet}~\cite{Gemmeke2017} is a large-scale audio classification 
benchmark containing over 2 million human-labeled 10-second clips across 
527 classes. We use the balanced evaluation subset (20,371 clips) for 
standardized comparison with prior work, and a subset of 100K clips for 
training the contrastive encoder.

\textbf{EcoStream-Wild} comprises 48 hours of continuous uncurated audio collected using 4 Raspberry Pi 4B devices deployed in diverse real-world environments: a university office, a crowded cafeteria, a residential living room, and an outdoor balcony, recording over a 2-week period. To establish ground truth, we employed a semi-automated annotation protocol: a silence removal algorithm first proposed candidate segments, which were then manually verified and labeled by human annotators. The dataset features a highly unbalanced, realistic class distribution dominated by ``Silence/Background'' (60.2\%), followed by ``Speech'' (24.5\%), and specific ``Transient Events'' (15.3\%, e.g., footsteps, door slams, glass breaking) across 15 distinct classes. We commit to releasing the full dataset, annotation logs, and source code upon acceptance to facilitate reproducibility.

\subsubsection{Baseline Methods}
We compare StreamSplit against the following representative baselines:

\textbf{Edge-Only:} Full model training and inference on the edge device without server communication, representing maximum autonomy but limited model capacity due to memory constraints.

\textbf{Server-Only:} All audio data transmitted to the cloud server for processing, representing maximum accuracy but high bandwidth consumption and latency.

\textbf{FSL}~\cite{Thapa2022}: Federated Split Learning with fixed partition points, combining federated averaging with split inference.

\textbf{FedCL}~\cite{Zhang2023}: Federated Contrastive Learning using synchronized memory banks across clients with periodic server aggregation.

\textbf{Rule-Based Adaptive Split:} A heuristic baseline that offloads based on static thresholds for \textbf{both} bandwidth and CPU utilization (e.g., ``Offload if Bandwidth > $X$ AND CPU < $Y$''). This represents a competent engineering solution (an upgrade to Neurosurgeon~\cite{Kang2017}) without learning. Note that we exclude early-exit inference methods (e.g., SPINN~\cite{Laskaridis2020}) as baselines, as they optimize for classification confidence rather than the representation quality required for Contrastive Learning.

\subsubsection{Evaluation Metrics}
We evaluate system performance along three dimensions:

\textbf{System Efficiency:} Bandwidth consumption (KB/frame), end-to-end 
latency (ms), and energy consumption (mJ/frame) measured using 
PowerJoular~\cite{PowerJoular} on Raspberry Pi 4B.

\textbf{Representation Quality:} Linear probe accuracy (\%) on frozen 
embeddings, and mean Average Precision (mAP) for audio retrieval tasks.

\textbf{Adaptation:} Response time to network changes (ms) and accuracy 
stability under varying conditions (standard deviation).

\subsection{End-to-End System Performance}
\label{sec:eval:e2e}

We first evaluate StreamSplit's system-level efficiency gains—the critical metrics for practical edge deployment. 

\subsubsection{Bandwidth Reduction}
Figure~\ref{fig:bandwidth} presents bandwidth consumption across methods. Note that the results report the transmitted payload per \textbf{processing batch} of $N=8$ clips (approx. 8 seconds of audio). For the Server-Only baseline, transmitting 8 clips of raw PCM audio (16kHz, 16-bit mono) consumes $8 \times 32\text{KB} = 256\text{KB}$. StreamSplit achieves \textbf{77.1\% bandwidth reduction} (58.7 KB/batch) by transmitting compact intermediate representations.

\textbf{Comparison with Opus Baseline.} We analytically compare against standard Opus compression (VoIP profile at 32kbps). While Opus offers a competitive bandwidth baseline (consuming $\approx 4$ KB/frame, comparable to StreamSplit's most aggressive modes), it introduces a fundamental trade-off that contradicts edge offloading: it forces the cloud server to execute the \textit{entire} feature extraction pipeline ($100\%$ of the compute load). StreamSplit offers a Pareto-optimal frontier: we match Opus bandwidth efficiency when necessary, but typically offload $35\%$ of the compute burden to the edge (Fig.~\ref{fig:latency}) to reduce server latency and load, a benefit that pure audio compression cannot achieve.

Compared to baselines, FSL (187.2 KB/frame) and Rule-Based Split (124.3 KB/frame) achieve only 26.9\% and 51.4\% reduction respectively, as they lack data-aware adaptation. FedCL (201.4 KB/frame) incurs additional overhead from memory bank synchronization.

\begin{figure}[t]
\centering
\includegraphics[width=\columnwidth]{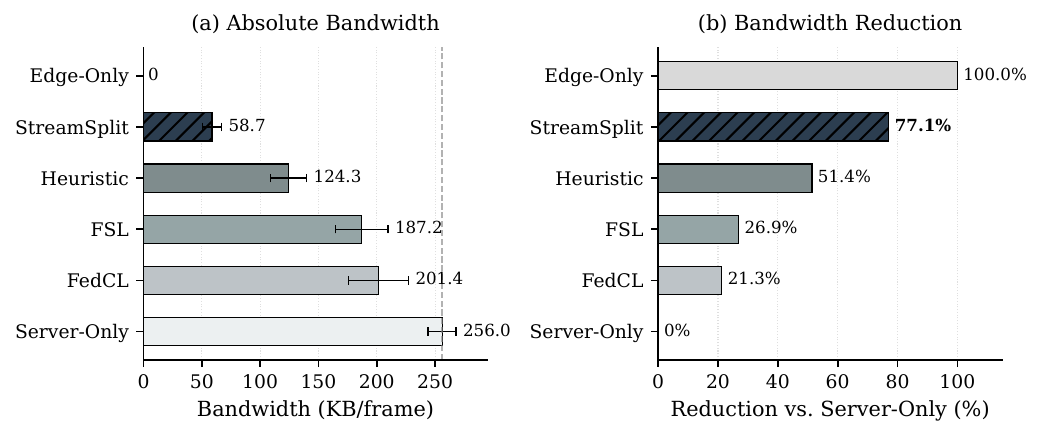}
\caption{\textbf{Bandwidth Consumption.} StreamSplit achieves 77.1\% reduction vs. Server-Only by transmitting compact, adaptive intermediate representations.}
\label{fig:bandwidth}
\end{figure}

\subsubsection{Latency Reduction}
Figure~\ref{fig:latency} shows end-to-end latency breakdown across methods. 
StreamSplit achieves \textbf{72.6\% latency reduction} compared to Server-Only 
(127ms vs. 464ms average). The latency composition reveals the source of gains: StreamSplit spends 45.2ms on edge compute (35.6\%)—which explicitly includes the $<2$ms overhead for GMM updates and RL policy inference—34.6ms on transmission (27.2\%), and 41.2ms on server inference (32.4\%). In contrast, Server-Only spends 312ms on transmission alone (67.2\%) due to raw audio 
upload.

Under degraded network conditions (2 Mbps, 150ms RTT), StreamSplit's 
advantage amplifies: 287ms vs. 1,847ms for Server-Only (84.5\% reduction). 
The RL agent automatically increases local processing, reducing transmission 
dependency.

\begin{figure}[t]
\centering
\includegraphics[width=\columnwidth]{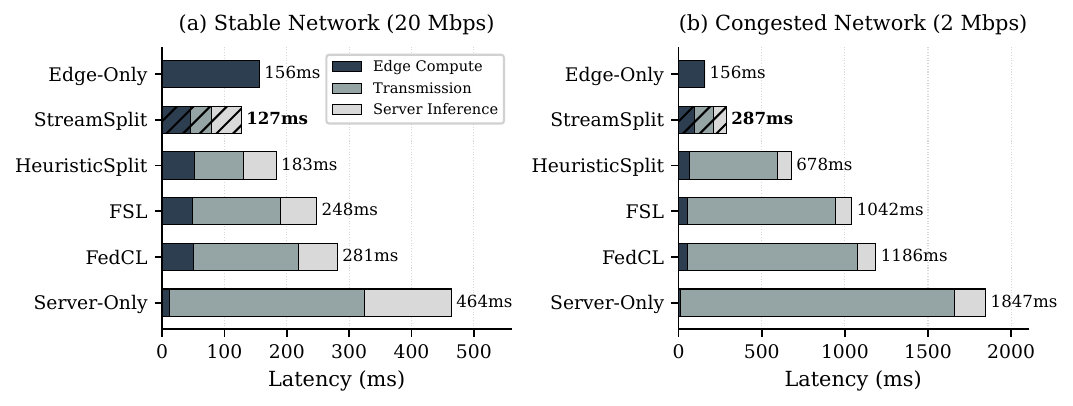}
\caption{\textbf{Latency Breakdown.} StreamSplit reduces latency by 72.6\% (stable) to 84.5\% (congested) via adaptive splitting.}
\label{fig:latency}
\end{figure}

\subsubsection{Energy Savings and Battery Life}
Table~\ref{tab:energy} summarizes energy consumption on Raspberry Pi 4B 
powered by a 10,000mAh battery pack. StreamSplit consumes \textbf{89.3 mJ/frame}, 
achieving \textbf{52.3\% energy savings} compared to Server-Only (187.2 mJ/frame). 
The savings stem from reduced radio transmission, which dominates energy 
consumption on battery-powered devices. 

Total energy consumption includes the uplink transmission of embeddings, local computation, and the periodic downlink synchronization of GMM parameters (from \S\ref{sec:implementation}).

Translated to battery life, StreamSplit enables \textbf{11.2 hours} of 
continuous operation compared to 5.3 hours for Server-Only—a 2.1$\times$ 
extension critical for untethered deployment. Edge-Only achieves longer 
battery life (14.8 hours) but at significant accuracy cost (see \S\ref{sec:eval:quality}).

\begin{table}[t]
\footnotesize
  \caption{\textbf{Energy Consumption and Battery Life} on Raspberry Pi 4B with 10,000mAh battery. Results report mean $\pm$ std dev over 5 runs.}
  \label{tab:energy}
  \centering
  \resizebox{\columnwidth}{!}{%
  \begin{tabular}{l|ccc|c}
    \toprule
    \textbf{Method} & \textbf{Compute} & \textbf{Transmit} & \textbf{Total} & \textbf{Battery Life} \\
    & \textbf{(mJ)} & \textbf{(mJ)} & \textbf{(mJ)} & \textbf{(hours)} \\
    \midrule
    Edge-Only & $67.4 \pm 1.2$ & $0.0$ & $67.4 \pm 1.2$ & $14.8 \pm 0.3$ \\
    Server-Only & $12.4 \pm 0.5$ & $174.8 \pm 8.2$ & $187.2 \pm 8.5$ & $5.3 \pm 0.2$ \\
    FSL & $48.7 \pm 2.1$ & $98.3 \pm 5.4$ & $147.0 \pm 6.1$ & $6.8 \pm 0.3$ \\
    FedCL & $52.1 \pm 1.9$ & $112.6 \pm 6.0$ & $164.7 \pm 6.8$ & $6.1 \pm 0.2$ \\
    Rule-Based Split & $54.2 \pm 1.5$ & $87.1 \pm 4.8$ & $141.3 \pm 5.1$ & $7.1 \pm 0.3$ \\
    \midrule
    \textbf{StreamSplit} & $\mathbf{54.7 \pm 2.3}$ & $\mathbf{34.6 \pm 3.1}$ & $\mathbf{89.3 \pm 4.1}$ & $\mathbf{11.2 \pm 0.5}$ \\
    \bottomrule
  \end{tabular}
  }
\end{table}

\subsection{Representation Quality}
\label{sec:eval:quality}

A critical question is whether StreamSplit's efficiency gains come at the cost of representation quality. We evaluate learned embeddings on downstream tasks to verify that accuracy remains competitive.

\subsubsection{Linear Probe Accuracy}
Figure~\ref{fig:accuracy} presents linear probe accuracy on frozen embeddings 
across both datasets. StreamSplit achieves \textbf{71.8\%} on AudioSet and 
\textbf{76.4\%} on the On-Device dataset, falling \textbf{within 2.2\%} of 
Server-Only (73.6\% and 78.1\% respectively). This marginal gap demonstrates 
that our distributional memory and hybrid loss effectively preserve 
representation quality despite the constraints of edge deployment.

Compared to other edge-compatible methods, StreamSplit outperforms Edge-Only 
by 13.2\% (AudioSet) and 11.8\% (On-Device), FSL by 5.4\% and 4.9\%, and 
FedCL by 3.1\% and 2.7\%. Rule-Based Split achieves 68.2\% and 72.1\%, 
suffering from suboptimal split decisions that sacrifice quality for 
bandwidth savings.

\begin{figure}[t]
\centering
\includegraphics[width=\columnwidth]{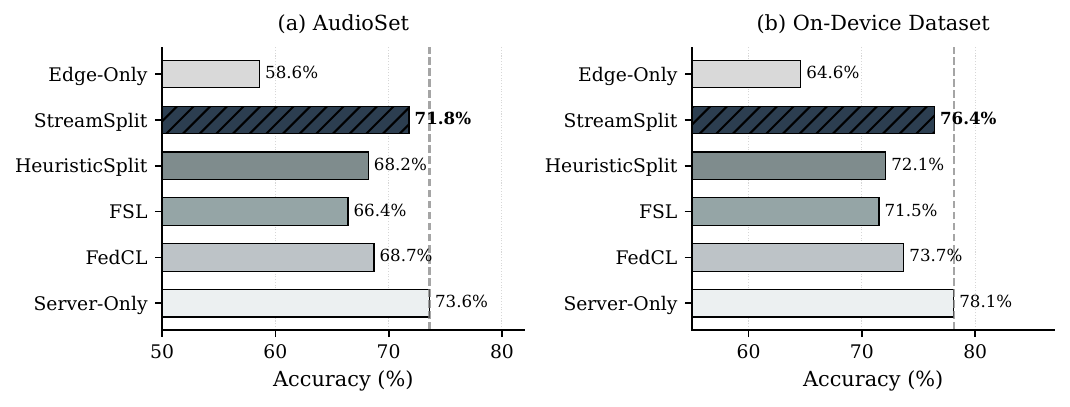}
\caption{\textbf{Linear Probe Accuracy.} StreamSplit matches Server-Only accuracy within 2.2\% across datasets.}
\label{fig:accuracy}
\end{figure}

\subsubsection{Retrieval Performance}
Table~\ref{tab:retrieval} presents audio retrieval metrics (mAP@10, R@1) 
using learned embeddings. StreamSplit achieves 0.412 mAP@10 and 38.7\% R@1, 
compared to Server-Only's 0.431 and 40.2\%—a gap of only 4.4\% and 3.7\% 
respectively. These results confirm that StreamSplit embeddings capture 
semantic similarity effectively, enabling downstream applications like 
audio search and similarity-based clustering.

\begin{table}[t]
\footnotesize
  \caption{\textbf{Audio Retrieval Performance} on AudioSet. StreamSplit 
  maintains competitive retrieval quality with minimal degradation.}
  \label{tab:retrieval}
  \centering
  \resizebox{0.6\columnwidth}{!}{%
  \begin{tabular}{l|cc}
    \toprule
    \textbf{Method} & \textbf{mAP@10} & \textbf{R@1 (\%)} \\
    \midrule
    Edge-Only & 0.287 & 26.4 \\
    Server-Only & 0.431 & 40.2 \\
    FSL & 0.356 & 33.1 \\
    FedCL & 0.378 & 35.8 \\
    Rule-Based Split & 0.341 & 31.9 \\
    \midrule
    \textbf{StreamSplit} & \textbf{0.412} & \textbf{38.7} \\
    \bottomrule
  \end{tabular}
  }
\end{table}

\subsection{Ablation Studies: Why It Works}
\label{sec:eval:ablation}

We conduct ablation studies to isolate the contribution of each StreamSplit component and understand the source of performance gains.

\subsubsection{RL Control vs. Alternatives}
Table~\ref{tab:ablation_rl} compares StreamSplit's RL-based control against 
alternative adaptation strategies. We evaluate three configurations:

\begin{itemize}
    \item \textbf{Static:} Fixed split point ($k=3$) regardless of conditions.
    \item \textbf{Rule-Based:} Heuristic adaptation using both bandwidth and CPU thresholds.
    \item \textbf{RL (Ours):} Learned policy considering CPU, bandwidth, and 
    uncertainty.
\end{itemize}

\begin{table}[t]
\footnotesize
  \caption{\textbf{Adaptation Strategy Comparison.} RL-based control outperforms alternatives in both steady-state and adaptation scenarios. (Mean $\pm$ std dev).}
  \label{tab:ablation_rl}
  \centering
  \resizebox{\columnwidth}{!}{%
  \begin{tabular}{l|ccc|c}
    \toprule
    \textbf{Strategy} & \textbf{Accuracy} & \textbf{Latency} & \textbf{Energy} & \textbf{Adaptation} \\
    & \textbf{(\%)} & \textbf{(ms)} & \textbf{(mJ)} & \textbf{Time (ms)} \\
    \midrule
    Static ($k=3$) & $68.7 \pm 0.2$ & $203 \pm 15$ & $142.6 \pm 5.5$ & N/A \\
    Rule-Based & $69.4 \pm 0.3$ & $156 \pm 10$ & $118.7 \pm 4.2$ & $4,200 \pm 350$ \\
    \textbf{RL (Ours)} & $\mathbf{71.8 \pm 0.4}$ & $\mathbf{127 \pm 12}$ & $\mathbf{89.3 \pm 4.1}$ & $\mathbf{1,200 \pm 150}$ \\
    \bottomrule
  \end{tabular}
  }
\end{table}

The RL policy achieves 2.4\% higher accuracy, 18.6\% lower latency, and 
24.8\% lower energy than the heuristic baseline. Critically, adaptation 
time—the delay before the system responds to network changes—is 3.5$\times$ 
faster (1,200ms vs. 4,200ms). The heuristic requires multiple probe 
transmissions to estimate new bandwidth, while the RL agent adapts 
within a single decision interval using its learned state representation.

\subsubsection{Hybrid Loss vs. Alternatives}
Table~\ref{tab:ablation_loss} evaluates the contribution of our hybrid 
loss components under varying frame drop rates (simulating network volatility).

\begin{table}[t]
\footnotesize
  \caption{\textbf{Loss Function Comparison} under varying frame drop rates. Hybrid loss maintains robustness while alternatives degrade significantly. (Mean $\pm$ std dev).}
  \label{tab:ablation_loss}
  \centering
  \resizebox{0.8\columnwidth}{!}{%
  \begin{tabular}{l|ccc}
    \toprule
    \textbf{Loss Function} & \textbf{0\% Drop} & \textbf{20\% Drop} & \textbf{40\% Drop} \\
    \midrule
    MSE Only & $69.2 \pm 0.3$ & $61.4 \pm 1.2$ & $52.8 \pm 2.5$ \\
    KL Divergence & $70.1 \pm 0.3$ & $63.7 \pm 1.1$ & $55.1 \pm 2.1$ \\
    $\mathcal{L}_{task}$ + $\mathcal{L}_{SW}$ & $70.8 \pm 0.4$ & $67.2 \pm 0.8$ & $61.3 \pm 1.5$ \\
    $\mathcal{L}_{task}$ + $\mathcal{L}_{Lap}$ & $70.4 \pm 0.3$ & $66.8 \pm 0.9$ & $60.7 \pm 1.4$ \\
    \textbf{Hybrid (Ours)} & $\mathbf{71.8 \pm 0.4}$ & $\mathbf{69.4 \pm 0.5}$ & $\mathbf{65.2 \pm 0.9}$ \\
    \bottomrule
  \end{tabular}%
  }
\end{table}

Under ideal conditions (0\% drop), all losses perform comparably. However, 
as frame drop rate increases, the hybrid loss demonstrates superior 
robustness: at 40\% drop rate, it maintains 65.2\% accuracy compared to 
52.8\% for MSE and 55.1\% for KL divergence—a 12.4\% and 10.1\% advantage 
respectively. The Sliced-Wasserstein term ($\mathcal{L}_{SW}$) prevents 
mode collapse from sparse updates, while Laplacian smoothness ($\mathcal{L}_{Lap}$) 
maintains manifold continuity across gaps.

\subsubsection{Uncertainty Calibration}
A critical concern is whether local uncertainty ($U_t$) is a reliable proxy for global sample difficulty. We analyzed the correlation between edge-calculated entropy and the reduction in loss achieved by server processing. Figure \ref{fig:calibration} shows a strong positive correlation (Pearson $r=0.84$). This confirms that when the edge model is uncertain, the sample is indeed 'hard' and benefits most from server refinement. Samples with low local uncertainty showed negligible improvement when processed by the server, validating our offloading logic.

\begin{figure}[t]
\centering
\begin{minipage}[b]{0.45\columnwidth}
  \centering
  \resizebox{\linewidth}{!}{%
  \begin{tikzpicture}
    \begin{axis}[
        xlabel={Local Uncertainty ($U_t$)},
        ylabel={Server Loss Reduction},
        title={(a) Metric Validation},
        title style={yshift=-1.5ex, font=\small\bfseries},
        scatter/classes={a={mark=*,blue!60!black, mark size=1.5pt}},
        grid=major,
        width=5cm,
        height=4cm,
        ylabel style={font=\footnotesize, yshift=-1mm},
        xlabel style={font=\footnotesize},
        tick label style={font=\tiny}
    ]
    \addplot[scatter,only marks, scatter src=explicit symbolic]
    coordinates {
        (0.1, 0.02) [a] (0.15, 0.04) [a] (0.2, 0.05) [a]
        (0.3, 0.10) [a] (0.35, 0.12) [a] (0.4, 0.22) [a]
        (0.5, 0.30) [a] (0.55, 0.35) [a] (0.6, 0.42) [a]
        (0.7, 0.55) [a] (0.75, 0.62) [a] (0.8, 0.70) [a]
        (0.85, 0.78) [a] (0.9, 0.82) [a]
    };
    \addplot [red, thick, domain=0:1] {0.9*x^2 + 0.05}; 
    \end{axis}
  \end{tikzpicture}
  }
\end{minipage}
\hfill
\begin{minipage}[b]{0.45\columnwidth}
  \centering
  \resizebox{\linewidth}{!}{%
  \begin{tikzpicture}
    \begin{axis}[
        xlabel={Uncertainty Bin},
        ylabel={Error Rate (\%)},
        title={(b) Failure Analysis},
        title style={yshift=-1.5ex, font=\small\bfseries},
        symbolic x coords={Low, Med, High},
        xtick=data,
        legend pos=north west,
        legend style={font=\tiny, nodes={scale=0.8, transform shape}},
        grid=major,
        width=5cm,
        height=4cm,
        ylabel style={font=\footnotesize, yshift=-1mm},
        xlabel style={font=\footnotesize},
        tick label style={font=\tiny},
        ymin=0, ymax=60
    ]
    \addplot[color=ssred, mark=square*, thick] coordinates {
        (Low, 12)
        (Med, 28)
        (High, 55)
    };
    \addlegendentry{Edge-Only}

    \addplot[color=ssgreen, mark=*, thick] coordinates {
        (Low, 11)
        (Med, 14)
        (High, 16)
    };
    \addlegendentry{StreamSplit}
    \end{axis}
  \end{tikzpicture}
  }
\end{minipage}
\caption{\textbf{Uncertainty Analysis.} \textbf{(a)} Local uncertainty correlates ($r=0.84$) with server utility. \textbf{(b) Acoustic Event Adaptation.} High uncertainty ($U_t$) regions correspond to transient audio events (e.g., speech). StreamSplit automatically offloads these complex frames while processing steady-state background noise locally.}
\label{fig:calibration}
\end{figure}

\subsection{Hardware Generalization: M2 vs. Pi 4}
\label{sec:eval:hardware}
A key requirement for practical deployment is generalization across heterogeneous hardware. We evaluate StreamSplit on two platforms representing opposite ends of the edge computing spectrum:

\begin{itemize}
    \item \textbf{Raspberry Pi 4B:} ARM Cortex-A72 @ 1.5GHz, 4GB RAM, 
    representing resource-constrained IoT devices.
    \item \textbf{Apple MacBook M2:} 8-core CPU with Neural Engine, 8GB 
    unified memory, representing capable edge devices.
\end{itemize}

\subsubsection{Performance Comparison}
Table~\ref{tab:hardware} compares StreamSplit across platforms. On Apple M2, 
StreamSplit achieves higher accuracy (73.2\% vs. 71.8\%) and lower latency 
(67ms vs. 127ms) due to faster local inference. Interestingly, energy 
efficiency improves less dramatically (78.4 mJ vs. 89.3 mJ), as the M2's 
power consumption scales with its higher compute throughput.

\begin{table}[t]
\footnotesize
  \caption{\textbf{Cross-Platform Performance.} StreamSplit adapts automatically to hardware capabilities, achieving optimal performance on both constrained (Pi 4B) and capable (M2) devices. Results report the mean $\pm$ standard deviation over 5 independent runs.}
  \label{tab:hardware}
  \centering
  \resizebox{\columnwidth}{!}{%
  \begin{tabular}{l|cc|cc}
    \toprule
    & \multicolumn{2}{c|}{\textbf{Raspberry Pi 4B}} & \multicolumn{2}{c}{\textbf{Apple MacBook M2}} \\
    \textbf{Metric} & \textbf{Value} & \textbf{vs. Server} & \textbf{Value} & \textbf{vs. Server} \\
    \midrule
    Accuracy (\%) & $71.8 \pm 0.4$ & $-1.8\%$ & $73.2 \pm 0.3$ & $-0.4\%$ \\
    Latency (ms) & $127 \pm 12$ & $-72.6\%$ & $67 \pm 5$ & $-85.6\%$ \\
    Energy (mJ) & $89.3 \pm 4.1$ & $-52.3\%$ & $78.4 \pm 2.8$ & $-58.1\%$ \\
    Bandwidth (KB) & $58.7 \pm 6.2$ & $-77.1\%$ & $42.3 \pm 4.5$ & $-83.5\%$ \\
    \bottomrule
  \end{tabular}
  }
\end{table}

\subsubsection{Learned Policy Behavior}
Figure~\ref{fig:policy_comparison} visualizes the RL policies learned for 
each platform. The policies exhibit distinct, hardware-aware behaviors:

\textbf{Pi 4B Policy:} Conservative local processing. The agent selects 
shallower split points (average $k=2.1$), offloading 38\% of frames entirely 
($k=0$). Under memory pressure, it aggressively offloads to avoid cache 
thrashing, prioritizing system stability over bandwidth savings.

\textbf{M2 Policy:} Aggressive local processing. The agent selects deeper 
split points (average $k=3.8$), offloading only 12\% of frames. With ample 
compute headroom, it maximizes local inference to minimize transmission, 
achieving 83.5\% bandwidth reduction.

\begin{figure}[t]
\centering
\includegraphics[width=\columnwidth]{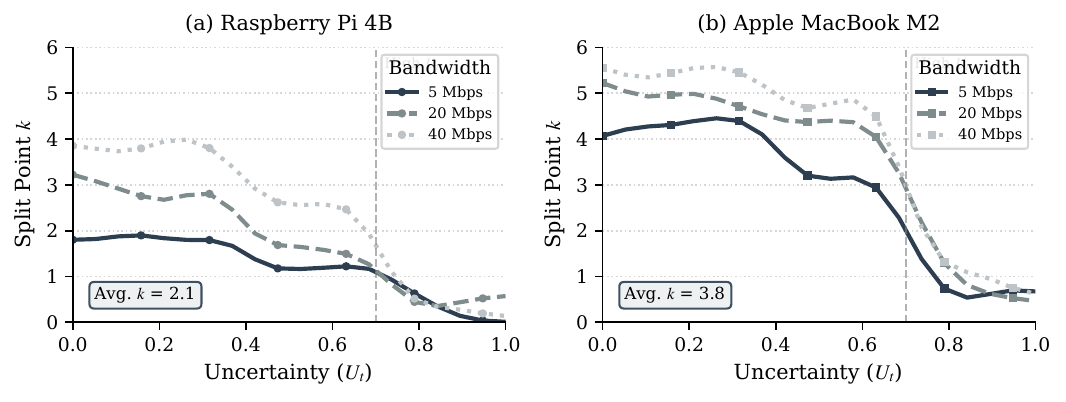}
\caption{\textbf{Learned Policies.} The agent learns hardware-aware strategies: conservative offloading for Pi 4B vs. aggressive local compute for M2.}
\label{fig:policy_comparison}
\end{figure}

Notably, both policies share a common pattern: under high uncertainty 
($U_t > 0.7$), they offload regardless of hardware capability, recognizing 
that ambiguous samples benefit from server refinement. This data-aware 
behavior emerges automatically from the uncertainty-guided reward, without 
explicit programming.

\subsubsection{Cross-Platform Policy Transfer}
To test generalization, we evaluate a policy trained on Pi 4B when deployed directly on M2 (and vice versa). Table~\ref{tab:transfer} shows that \textbf{direct policy transfer} incurs only 1.2--2.4\% accuracy degradation, suggesting the learned policies capture generalizable adaptation principles rather than hardware-specific quirks. Fine-tuning for 10K steps on the target platform recovers full performance.

\begin{table}[t]
    \footnotesize
  \caption{\textbf{Cross-Platform Policy Transfer.} Policies generalize across hardware with minimal degradation; fine-tuning recovers full performance. Results report the mean $\pm$ standard deviation over 5 independent runs.}
  \label{tab:transfer}
  \centering
  \begin{tabular}{l|cc}
    \toprule
    \textbf{Configuration} & \textbf{Accuracy (\%)} & \textbf{Latency (ms)} \\
    \midrule
    Pi 4B $\rightarrow$ Pi 4B (native) & $71.8 \pm 0.4$ & $127 \pm 12$ \\
    M2 $\rightarrow$ Pi 4B (transfer) & $69.4 \pm 0.6$ & $143 \pm 15$ \\
    M2 $\rightarrow$ Pi 4B (fine-tuned) & $71.6 \pm 0.5$ & $129 \pm 11$ \\
    \midrule
    M2 $\rightarrow$ M2 (native) & $73.2 \pm 0.3$ & $67 \pm 5$ \\
    Pi 4B $\rightarrow$ M2 (transfer) & $72.0 \pm 0.5$ & $78 \pm 8$ \\
    Pi 4B $\rightarrow$ M2 (fine-tuned) & $73.1 \pm 0.4$ & $68 \pm 6$ \\
    \bottomrule
  \end{tabular}
\end{table}

\section{Discussion}
\label{sec:discussion}


\textbf{Data Obfuscation and Privacy.} StreamSplit transmits intermediate embeddings rather than raw audio primarily to improve \textit{bandwidth efficiency}. We state explicitly that while this avoids sending raw PCM data, it provides \textit{incidental obfuscation only}. Reconstructing original audio from deep embeddings, though computationally non-trivial, remains theoretically possible \cite{Carlini2023}. Therefore, StreamSplit does not provide formal differential privacy or cryptographic guarantees. It should be viewed strictly as a resource-optimization framework; edge applications requiring rigorous security would need to integrate orthogonal privacy-preserving protocols—such as those addressing broad fairness and data minimization in IoT \cite{shaham2025privacy, quan2024toward}—or advanced network-level security mechanisms like real-time attack attribution in 6G \cite{quan2025domain}, which we leave to future work.

\textbf{Hardware Generalization Limits.} Our evaluation spans Raspberry Pi 4B 
to Apple M2, but StreamSplit's 50MB memory footprint precludes deployment on 
microcontroller-class devices (ESP32, Arduino). Extending to ultra-low-power 
platforms would require aggressive quantization and fixed-point GMM redesign. 
The algorithmic contributions remain architecture-agnostic and could transfer 
to compressed models.

\textbf{Network Assumptions.} StreamSplit assumes bidirectional edge-server 
connectivity. Under complete disconnection, the system degrades to Edge-Only 
mode with reduced representation quality. Tolerating extended offline periods 
through delayed synchronization remains an open challenge.

\textbf{Multi-Client Scope.} While this work focuses on optimizing the single-stream vertical offloading link, the server-side $\mathcal{L}_{SW}$ term (Eq.~\ref{eq:server_loss}) naturally mitigates bias in multi-client settings. By enforcing that the \textit{aggregated} buffer distribution matches the global uniform prior, StreamSplit prevents high-throughput devices (which contribute more frames) from dominating the embedding space geometry.
\section{Related Work}
\label{sec:related}

\textbf{Contrastive Learning on Edge.} Standard contrastive methods require large memory banks~\cite{He2020} or batch sizes~\cite{Chen2020} infeasible on edge devices. Recent lightweight alternatives employ importance sampling~\cite{Zhuang2022} or knowledge distillation~\cite{Wang2023}, but still require substantial memory overhead. StreamSplit introduces \textit{distributional memory}—a compact GMM replacing explicit storage—enabling contrastive learning with $<$35KB overhead through boundary-aware negative synthesis. While recent works like SynCo \cite{giakoumoglou2024synco} propose synthetic negatives, they typically mix feature-level negatives from a queue. StreamSplit differs by generating virtual negatives from a parametric GMM distribution, eliminating the need to store a memory bank entirely. We exclude non-contrastive baselines (e.g., BYOL-A \cite{niizumi2021byola}, SimSiam \cite{chen2021simsiam}, VICReg \cite{bardes2022vicreg}) as they rely on batch normalization statistics (undefined for streaming $N=1$) or momentum encoders (doubling memory), which are incompatible with strict edge constraints.

\textbf{Energy Efficiency Context.} Recent benchmarking of SSL on the edge \cite{fama2025contrastive} confirms that standard CL frameworks (e.g., SimCLR with ResNet-18) exhibit prohibitive energy footprints due to memory and compute intensity. StreamSplit situates itself within this landscape by demonstrating that while purely local execution offers the lowest absolute energy (67.4 mJ, Table~\ref{tab:energy}), it suffers from dimensional collapse. StreamSplit incurs a moderate energy premium over these local baselines to achieve server-grade accuracy, while still reducing energy by 52.3\% compared to the static offloading approaches often assumed in prior energy profiles.

\textbf{Split Computing and Split Learning.} Neurosurgeon~\cite{Kang2017} pioneered adaptive edge-cloud partitioning, with extensions to multi-exit architectures~\cite{Laskaridis2020} and privacy-aided federated settings~\cite{Thapa2022, quan2023hiersfl}. However, existing approaches assume fixed partition points, failing to adapt to runtime volatility. StreamSplit introduces \textit{uncertainty-guided control}, dynamically selecting split points based on system state and data difficulty, achieving 3.5$\times$ faster adaptation than heuristic methods.

\textbf{The Limits of Decoupled Approaches.} Existing resource-aware architectures reactively split computation based on telemetry \cite{Kang2017, Huang2023roof}. Naively pairing these schedulers with contrastive objectives fails: inevitable network-induced frame drops shatter the temporal embedding manifold. StreamSplit overcomes this via joint co-design. Proactively, our Uncertainty-Guided Splitter aligns offloading with actual sample difficulty (validating our GMM entropy metric, which yields a Pearson correlation of $r=0.84$ with server utility). Reactively, our Server Refiner's Hybrid Loss integrates Laplacian regularization ($\mathcal{L}_{Lap}$) \cite{Belkin2006} to enforce a Lipschitz-continuous smoothing prior. This allows the server to gracefully bridge temporal gaps from sparse updates without explicit interpolation, outperforming independent small-batch \cite{Fang2022} or scheduling interventions.

\textbf{Federated Learning.} FL enables distributed training with data locality across edge and cyber-physical systems~\cite{McMahan2017, Kairouz2021, quan2025federated}, with recent extensions to contrastive settings~\cite{Zhang2023, Lubana2022}. However, FL assumes periodic synchronization and homogeneous architectures, conflicting with streaming applications requiring continuous updates on heterogeneous hardware. StreamSplit addresses this through asynchronous server refinement, complementing rather than replacing FL systems. Similarly, while LW-FedSSL \cite{tun2024lw} reduces resource usage via layer-wise training, StreamSplit focuses on real-time \textit{inference} and continuous learning via adaptive splitting, which is orthogonal to layer-wise federated training strategies.
\section{Conclusion}
\label{sec:conclusion}

We introduced StreamSplit, a framework enabling streaming contrastive learning on edge devices. By replacing memory-intensive queues with \textit{Distributional Memory} ($<35$KB) and optimizing offloading via \textit{Uncertainty-Guided RL}, StreamSplit resolves fundamental resource conflicts. Our evaluation confirms bandwidth reductions of 77.1\% and energy savings of 52.3\% while maintaining accuracy within 2.2\% of server-only baselines. The framework proves robust across heterogeneous hardware (Pi 4B to M2), offering a scalable path for adaptive edge intelligence.

\begin{acks}
In accordance with ACM's Policy on Authorship, the authors disclose the use of generative AI tools to assist in the conceptual drafting of graphical icons used in preliminary versions of this manuscript's figures. All formal technical schematics, system architectures, and data plots presented in this final version were manually constructed by the authors to ensure technical accuracy and rigor.
\end{acks}

\appendix
\section{Theoretical Convergence Guarantees}
\label{app:proofs}

\textbf{Lemma 1 (Partition Function Bias).} The contrastive objective asymptotically maximizes mutual information. However, for a finite batch size $N$, the estimator of the partition function $Z_\theta(x) = \mathbb{E}_{z \sim p_\theta}[e^{f(x)^\top z}]$ is biased. The deviation in the empirical loss is dominated by:
\begin{equation}
    \Delta \mathcal{L} = \left| \log \left( \frac{1}{N} \sum_{i=1}^N e^{f(x)^\top z_i} \right) - \log \mathbb{E}_{z \sim p_\theta}[e^{f(x)^\top z}] \right|
\end{equation}

\textbf{Definition 4 (Wasserstein-1 Distance).} Let $\text{Lip}_1(\mathbb{S}^{d-1})$ be the set of 1-Lipschitz functions on the hypersphere. The Wasserstein-1 distance between distributions $p$ and $q$ is given by the Kantorovich-Rubinstein duality:
\begin{equation}
    \mathcal{W}_1(p, q) = \sup_{h \in \text{Lip}_1} \left| \mathbb{E}_{z \sim p}[h(z)] - \mathbb{E}_{z \sim q}[h(z)] \right|
\end{equation}

\vspace{0.5em}
\noindent\textbf{Proof of Theorem \ref{thm:diversity} (Small-Batch Robustness).} 

\textit{Setup.} Let the contrastive loss for anchor $x$ be:
\begin{equation}
    \mathcal{L}(x) = -\log \frac{e^{f(x)^\top z^+}}{\frac{1}{N}\sum_{i=1}^N e^{f(x)^\top z_i}}
\end{equation}
where $z^+$ is the positive sample and $\{z_i\}_{i=1}^N$ are negatives sampled 
from $p_\theta$. The population loss replaces the finite sum with the expectation 
under $p_\theta$.

\textit{Step 1: Distributional Mismatch.} Define the critic function 
$h(z) = e^{f(x)^\top z}$. Since embeddings are $\ell_2$-normalized ($\|z\|=1$) 
and $\|f(x)\| \le 1$, we have $h(z) \in [e^{-1}, e]$, making $h$ bounded with 
Lipschitz constant $K \le e$.

By the Kantorovich-Rubinstein duality for $\mathcal{W}_1$:
\begin{equation}
    \left| \mathbb{E}_{z \sim p_\theta}[h(z)] - \mathbb{E}_{z \sim \mathcal{U}}[h(z)] \right| 
    \le K \cdot \mathcal{W}_1(p_\theta, \mathcal{U}) < K\epsilon
\end{equation}

\textit{Step 2: Finite Sample Error.} By Hoeffding's inequality, for $N$ i.i.d. 
samples from $p_\theta$:
\begin{equation}
    \left| \frac{1}{N}\sum_{i=1}^N h(z_i) - \mathbb{E}_{p_\theta}[h(z)] \right| 
    \le (e - e^{-1}) \sqrt{\frac{\log(2/\delta)}{2N}}
\end{equation}
with probability at least $1-\delta$.

\textit{Step 3: Combined Bound.} Since $\log(\cdot)$ is Lipschitz on $[e^{-1}, e]$ 
with constant $e$, composing the bounds via triangle inequality yields:
\begin{equation}
    |\mathcal{L}_{N} - \mathcal{L}_{\infty}| \le \underbrace{e \cdot K \cdot \epsilon}_{C_1 \epsilon} 
    + \underbrace{e(e - e^{-1})\sqrt{\frac{\log(2/\delta)}{2}}}_{C_2} \cdot \frac{1}{\sqrt{N}}
\end{equation}

Thus, minimizing the diversity gap $\epsilon$ (via $\mathcal{L}_{SW}$) directly 
compensates for the bias induced by small $N$, allowing edge devices with small 
batches to approximate large-batch server performance. \qed

\vspace{1em}
\noindent\textbf{Proof of Theorem \ref{thm:affinity} (Temporal Interpolation).}

\textit{Setup.} Let $\mathbf{Z} \in \mathbb{R}^{T \times d}$ be the embedding matrix 
where row $z_t$ is the embedding at time $t$. Let $\mathcal{G} = (V, E, W)$ be the 
temporal adjacency graph with Laplacian $\mathbf{L}$. For a missing frame at $t^*$, 
define the reconstruction as the weighted neighbor average:
\begin{equation}
    \hat{z}_{t^*} = \frac{1}{d_{t^*}} \sum_{j \in \mathcal{N}_{nbr}(t^*)} W_{t^* j} z_j
\end{equation}
where $d_{t^*} = \sum_{j \in \mathcal{N}_{nbr}(t^*)} W_{t^* j}$ is the weighted degree and $\mathcal{N}_{nbr}(t^*)$ denotes the set of temporal neighbors.

\textit{Step 1: Local Reconstruction Error.} The squared reconstruction error is:
\begin{align}
    ||z_{t^*} - \hat{z}_{t^*}||^2 &= \left|\left| z_{t^*} - \frac{1}{d_{t^*}} \sum_{j \in \mathcal{N}(t^*)} W_{t^* j} z_j \right|\right|^2 \\
    &= \left|\left| \frac{1}{d_{t^*}} \sum_{j \in \mathcal{N}(t^*)} W_{t^* j} (z_{t^*} - z_j) \right|\right|^2
\end{align}

By Jensen's inequality (convexity of $||\cdot||^2$):
\begin{equation}
    ||z_{t^*} - \hat{z}_{t^*}||^2 \le \frac{1}{d_{t^*}} \sum_{j \in \mathcal{N}(t^*)} W_{t^* j} ||z_{t^*} - z_j||^2
\end{equation}

\textit{Step 2: Relating Local to Global Energy.} Define the local energy at node $t^*$:
\begin{equation}
    E_{local}(t^*) = \sum_{j \in \mathcal{N}(t^*)} W_{t^* j} ||z_{t^*} - z_j||^2
\end{equation}

The total Dirichlet energy is $E_{total} = \frac{1}{2}\sum_{t} E_{local}(t) = \text{Tr}(\mathbf{Z}^\top \mathbf{L} \mathbf{Z})$, 
where the factor of $1/2$ accounts for double-counting edges.

For any node $t^*$, the local energy is bounded by the total:
\begin{equation}
    E_{local}(t^*) \le 2 \cdot E_{total} = 2 \cdot \text{Tr}(\mathbf{Z}^\top \mathbf{L} \mathbf{Z})
\end{equation}

\textit{Step 3: Spectral Gap Refinement.} For connected graphs, the discrete 
Poincaré inequality states that for any zero-mean signal $\mathbf{f}$:
\begin{equation}
    \text{Var}(\mathbf{f}) \le \frac{1}{\lambda_2} \mathbf{f}^\top \mathbf{L} \mathbf{f}
\end{equation}

Applying this per-coordinate to the centered embeddings $\tilde{\mathbf{Z}} = \mathbf{Z} - \bar{z}$ 
(where $\bar{z}$ is the mean embedding), we obtain that the average local deviation 
is controlled by the spectral gap. For the worst-case node:
\begin{equation}
    \max_{t^*} E_{local}(t^*) \le \frac{2}{\lambda_2} \cdot \text{Tr}(\mathbf{Z}^\top \mathbf{L} \mathbf{Z})
\end{equation}

\textit{Step 4: Final Bound.} Combining Steps 1-3 and using $\mathcal{D}_{aff} = \frac{1}{|E|}\text{Tr}(\mathbf{Z}^\top \mathbf{L} \mathbf{Z}) \le \alpha$:
\begin{align}
    ||z_{t^*} - \hat{z}_{t^*}||^2 &\le \frac{1}{d_{t^*}} E_{local}(t^*) \\
    &\le \frac{1}{d_{t^*}} \cdot \frac{2}{\lambda_2} \cdot |E| \cdot \alpha \\
    &= \frac{2\alpha \cdot |E|}{\lambda_2 \cdot d_{t^*}}
\end{align}

For unweighted graphs with uniform degree, $d_{t^*} = |\mathcal{N}(t^*)|$, yielding 
the stated bound. \qed

\vspace{0.5em}
\noindent\textit{Remark.} This bound formalizes the intuition of manifold regularization: 
a high spectral gap $\lambda_2$ (ensured by dense temporal connectivity) combined with 
low Dirichlet energy $\alpha$ (enforced by $\mathcal{L}_{Lap}$) guarantees bounded 
interpolation error, providing robustness to frame drops in volatile edge-cloud execution.

\section{Control Plane MDP Definition}
\label{app:controlplane}
This section provides the formal specification of the Markov Decision Process (MDP) utilized by the Uncertainty-Guided Adaptive Splitter. As introduced in Section \ref{sec:rlformulatiom}, the StreamSplit Control Plane models the dynamic edge-cloud partitioning problem as an RL environment. Table \ref{tab:mdp} details the exact components of this formulation, including the observable state space (incorporating our zero-cost uncertainty metric), the discrete action space dictating the network split point, and the composite reward function designed to balance representation quality against strict latency and energy constraints.
\begin{table}[h]
\footnotesize
  \caption{\textbf{Control Plane MDP Definition.} The RL agent optimizes a 
  joint objective of task performance, latency constraints, and energy efficiency.}
  \label{tab:mdp}
  \centering
  \resizebox{\columnwidth}{!}{%
  \begin{tabular}{l|l}
    \toprule
    \textbf{Component} & \textbf{Description / Definition} \\
    \midrule
    \textbf{State} $s_t$ & Vector $[U_t, R_{cpu}, B_{net}] \in \mathbb{R}^3$ \\
    & $U_t$: Embedding Uncertainty (Entropy, Eq. \ref{eq:uncertainty}) \\
    & $R_{cpu}$: CPU Utilization (\%) \\
    & $B_{net}$: Est. Uplink Bandwidth (Mbps, EMA) \\
    \midrule
    \textbf{Action} $a_t$ & Split Layer $k \in \{0, 1, \dots, L\}$ \\
    & $k=0$: Full Offload; $k=L$: Full On-Device \\
    \midrule
    \textbf{Reward} $r_t$ & $r_t = \alpha \cdot \mathcal{A}_{task} - \beta 
    \cdot \frac{\text{Lat}_t}{T_{max}} - \eta \cdot \frac{E_t}{E_{budget}}$ 
    (Eq. \ref{eq:reward}) \\
    & Balances Accuracy ($\mathcal{A}$), Latency (Lat), Energy ($E$) \\
    \bottomrule
  \end{tabular}
  }
\end{table}


\printbibliography

\end{document}